\let\newfloat\newfloat@ltx
\newtheorem{definition}{Definition}
\newtheorem{theorem}{Theorem}
\newtheorem{corollary}{Corollary}
\newtheorem{lemma}{Lemma}
\newtheorem{proposition}{Proposition}
\begin{document}
\title{Optimal Number-Conserved Linear Encoding for Practical Fermionic Simulation}

\author{M. H. Cheng}
\altaffiliation{These authors contributed equally to this work}
\affiliation{Department of Physics Blackett Laboratory, Imperial College London, London, SW7 2AZ, United Kingdom}
\affiliation{Fraunhofer Institute for Industrial Mathematics, Fraunhofer-Platz 1, 67663 Kaiserslautern, Germany}

\author{Yu-Cheng Chen}
\altaffiliation{These authors contributed equally to this work}
\affiliation{Department of Mechanical Engineering, City University of Hong Kong, Kowloon, Hong Kong SAR 999077, China}
\affiliation{Hon Hai (Foxconn) Research Institute, Taipei, Taiwan}

\author{Qian Wang}
\affiliation{Department of Mechanical Engineering, City University of Hong Kong, Kowloon, Hong Kong SAR 999077, China}

\author{V.  Bartsch}
\affiliation{Fraunhofer Center for Maritime Logistics, Blohmstrasse 32, 21079 Hamburg, Germany}

\author{A. C. Medina}
\affiliation{Fraunhofer Institute for Industrial Mathematics, Fraunhofer-Platz 1, 67663 Kaiserslautern, Germany}

\author{M. S. Kim}
\email{m.kim@imperial.ac.uk}
\affiliation{Department of Physics Blackett Laboratory, Imperial College London, London, SW7 2AZ, United Kingdom}
\email{m.kim@imperial.ac.uk}

\author{Alice Hu}
\email{alicehu@cityu.edu.hk}
\affiliation{Department of Mechanical Engineering, City University of Hong Kong, Kowloon, Hong Kong SAR 999077, China}
\affiliation{Department of Materials Science and Engineering, City University of Hong Kong, Kowloon, Hong Kong SAR 999077, China}

\author{Min-Hsiu Hsieh}
\email{min-hsiu.hsieh@foxconn.com}
\affiliation{Hon Hai (Foxconn) Research Institute, Taipei, Taiwan}

\newcommand{\cheng}[1]{\textcolor{black}{#1}}
\newcommand{\acm}[1]{\textcolor{red}{#1}}

\begin{abstract}%

Number-conserved subspace encoding reduces resources needed for quantum simulations, but scalable complexity trade-off bounds for $M$ modes and $N$ particles with $\mathcal{O}(N\log M)$ qubits have remained unknown. We study qubit-gate-measurement trade-offs through the lens of classical/quantum error correction complexity, and develop a framework of fermionic gate and measurement complexity based on encoder and decoder complexities appeared in error correction framework. We demonstrate optimal encoding with random classical parity check code and propose the Fermionic Expectation Decoder for scalable probability decoding in $\mathcal{O}(M^4)$ bases. The protocol is tested with variational quantum eigensolver on LiH in the STO-3G and 6-31G basis, and $\text{H}_2$ potential energy curve in the 6-311G* basis.




\end{abstract}
\maketitle



    

\section{Introduction}

Simulating many-body fermionic systems is an essential tool in various fields like quantum chemistry, condensed matter, and high energy physics \cite{Bauer_2020, kitaev2009topological, bauer2022quantum}, promising better designs for batteries and drugs, new quantum technologies, and discoveries in the exotic nature of matter \cite{ho_promise_2018, santagati2023drug, Zhang_2017}. However, accurate fermionic simulation on classical computers requires exponential computational resources \cite{feynman2018simulating}. While a quantum computer is expected to overcome this limit, it remains a central challenge to demonstrate quantum advantages in this domain.

\begin{table*}[!htb]
    \centering
    \begin{tabular}{|c|c|c|c|c|c|c|}
        \hline
        & Jordan Wigner \cite{Jordan1928wi}& Segment \cite{Steudtner_2018} &Graph-Based \cite{bravyi2017tapering} & Polylog \cite{kirby2022second} & QEE \cite{shee2022qubit} & Our Work\\
        \hline
        Qubit cost& $M$ & $M-\frac{M}{2N}$ & $M-\frac{M}{N}$& $\mathcal{O}(N^2(\log M)^4)$ & $\lceil N\log M\rceil$ &$\lceil 2N\log M\rceil$ \\
        \hline
        Measurements& $\mathcal{O}(M^3)$ \cite{huggins_efficient_2021}&$\mathcal{O}(M^4)$ &$\mathcal{O}(M^4)$ & $\mathcal{O}(M^4)$& $\mathcal{O}(M^N)$& $\mathcal{O}(M^4)$\\
        \hline
    \end{tabular}
    \caption{This table compares the qubit-to-measurement trade-off of the existing encoding schemes at the regime $M>>N$. \cheng{We also provide the scaling of measurement bases for the listed linear encoders contributed from the FED algorithm.}}
    \label{tab: resource comparison}
\end{table*}

Electronic structure Hamiltonian encoding maps fermionic operators from the Fock space to qubit operators in the qubit space. The standard approaches such as the Jordan-Wigner, Parity, and Bravyi-Kitaev transformations \cite{Jordan1928wi, seeley2012bravyi} encode $M$ fermionic modes to $M$ qubits, regardless of the number of electrons $N$. However, since all fermionic Hamiltonians must conserve fermions, interesting physics depends only on the number-conserved subspace and takes only $\lceil N\log M\rceil$ qubits to encode in the precision limit $M>>N$ \cite{Shepherd_2012, Gruneis_2013}. Logarithmic scaling in $M$ enormously relieves the qubit resource requirements for simulation in the high precision limit, and it could serve as a powerful tool in the practical implementation of quantum simulation on near-term and fault-tolerant quantum computing (FTQC) devices \cite{Bharti_2022}. 

Another important aspect of fermionic encoding is the measurement scalability of the two-electron reduced density matrix (2-RDM). Unlike classical encoding, quantum encoding changes the measurement complexity in two ways. First, non-linear methods such as the projector approach \cite{Yen_2019} and
the Qubit Efficient Encoding (QEE) \cite{shee2022qubit} have attained the $\lceil N\log M \rceil$ qubit scaling via a direct fermionic state to qubit state encoding. However, it is plagued with unscalable encoding and measurement basis preparation cost of $\mathcal{O}(M^N)$, limiting its applications to fermionic systems with few electrons. Even if a specific grouping is found to mitigate the basis preparation cost, it will still suffer from an unbounded number of commuting Pauli strings.

Linear compression resolves the measurement and decoding cost. All linear compression methods, such as the segment code \cite{Steudtner_2018} and graph-based encoding \cite{bravyi2017tapering}, exhibit polynomial measurement costs with respect to $M$. However, they fail to efficiently encode fermionic modes and large number of electrons. The current best linear encoding demonstrates poly-logarithmic $\mathcal{O}(N^2(\log M)^4)$ qubit scaling \cite{kirby2022second} that overcomes the scaling in the previous works, but still is limited to molecules requiring large basis set not accessible to near-term devices. The optimal compression rate in linear compression remains unknown.

We identify a natural complexity duality between linear compression for quantum simulation (QS) and quantum error correction (QEC) framework across qubit, gate and measurement encoding-decoding process. Subsequently, we develop a framework of fermionic gate complexity based on quantum decoder, and fermionic measurement complexity based on classical decoder. The latter result leads to the Fermionic Expectation Decoding (FED) algorithm which classically 'error' decodes \cheng{expectation value from compressed fermionic probability distribution (encoded Hamiltonian) for arbitrary linear encoding schemes.} By avoiding calculation in the encoded space, FED allows for polynomial time classical post-processing with $\mathcal{O}(M^4)$ measurement basis, compared to JW $\mathcal{O}(M^3)$ measurement basis scaling \cite{huggins_efficient_2021}. Meanwhile, using parity check code of error correction code (CEC) that obeys the Gilbert-Varshamov (GV) bound \cite{bravyi2017tapering, Gilbert_1952, Varshamov_1957}, we show that optimal linear encoder has a qubit resource upper bound of $2N\log M$, demonstrating a encoder-decoder protocol with the optimal qubit-to-measurement trade-off. Polynomial-time encoders and decoders can be found in examples such as the geometric Goppa code \cite{Goppa_construction, Goppa_decoder} and Gallager's low density parity check (LDPC) code \cite{Gallager, List_Decoding}. Our protocol resolves scalability issue in non-linear encodings in both the encoding and decoding process via parity check code, polynomial measurement bases and parity check decoder for fermionic observable post-processing. \cheng{Compatibility with other linear encoding schemes allows us to leverage the complexity of classical decoder.} As a result, this scheme finds board applications for encoding fermionic systems in Noisy Intermediate Scale Quantum (NISQ) or FTQC quantum algorithms such as qubit Adaptive Derivative-Assembled Pseudo-Trotter ansatz Variational Quantum Eigensolver (qubit ADAPT-VQE) \cite{Qubit-Adapt-VQE} and quantum signal processing \cite{Low_Cheung}.

We benchmark \cheng{Gilbert bounded} linear encodings with a Randomized Linear Encoder (RLE) and a classical look-up table decoder to encode large-mode problems with a bounded number of electrons. RLE encoded circuits adhere to the $2N\log M$ qubit bound, surpassing the qubit compression rates of the segment code and graph-based encoding for all $N$ and $M$ . The encoding is tested using the variational quantum eigensolver (VQE) on LiH in the STO-3G and 6-31G basis at a bond length of 2.5 Å with the hardware-efficient ansatz (HEA) \cite{kandala_hardware-efficient_2017, zeng2023quantum}, and $\text{H}_2$ in the 6-311G* basis. The encoded LiH noiseless simulation achieves chemical accuracy using 30 CNOT gates with 5 layers, compared to the \cheng{circuit in the JW basis}, which does not converge even after 6 layers. We also compare both cases with the IBM Sherbrooke noise model. $\text{H}_2$ simulation achieves chemical accuracy across all bond lengths using only 66 local CNOT gates.





\section{Operator Framework}

Given a second-quantized electronic structure Hamiltonian:
\begin{equation}
    \hat{H} = \sum_{ij}h_{ij}\hat{a}^\dagger_i\hat{a}_j+\sum_{ijkl}g_{ijkl}\hat{a}^\dagger_i\hat{a}^\dagger_j\hat{a}_k\hat{a}_l,
    \label{eqn: hamiltonian}
\end{equation}
qubit operator representation of the excitation operators $\hat{a}^\dagger_i\hat{a}_j$ and $\hat{a}^\dagger_i\hat{a}^\dagger_j\hat{a}_k\hat{a}_l$, collectively denoted as $\hat{O}:=\hat{a}^\dagger_i \hat{a}_j, \hat{a}^\dagger_i\hat{a}^\dagger_j\hat{a}_k\hat{a}_l$, determines qubit, gate, and measurement complexities of quantum simulation. \cheng{In what follows, we review the framework and complexity of fermionic-to-qubit mapping, drawing connections to linear compression of number-conserved particle basis.}


Number-conserved bitstrings have fixed Hamming weight that can be compressed via a binary linear map $\mathbf{G}: \mathbb{F}^M_2\rightarrow \mathbb{F}^Q_2$. A fermionic/spin operator $\hat{O}$ can be represented with three types of binary vectors $\Vec{a}, \Vec{b}, \Vec{c}\in \mathbb{F}^M_2$. 
\cheng{In terms of qubit operator, these vectors respectively corresponds to transition, states undergoing transitions as projectors, and the parity picked up during transitions:}


\begin{equation}
    X^{\Vec{a}} = \prod^M_{m=1}\sigma_{x, m}^{\Vec{a}[m]}, 
    P^{\Vec{b}}=\prod_{m=1}^{M}\frac{1+(-1)^{\Vec{b}[m]}\sigma_{z, m}}{2},
    Z^{\Vec{c}} = \prod^M_{m=1}\sigma_{z, m}^{\Vec{c}[m]},
\end{equation}
dubbed as the X-string, P-string, and Z-string. Given an $M\times M$ encoder, such as the Bravyi-Kitaev (BK) encoding \cite{seeley2012bravyi}, the operators are encoded as \cite{Steudtner_2018}:
\begin{equation}
    \mathcal{E}\left(X^{\Vec{a}}\right) = X^{\mathbf{G}\Vec{a}},  
    \mathcal{E}\left(P^{\Vec{b}}\right) = P^{\mathbf{G}\Vec{b}},
    \mathcal{E}\left(Z^{\Vec{c}}\right) = Z^{(\mathbf{G}^{-1})^T\Vec{c}}.
    \label{eqn: Pauli-transformation}
\end{equation}

However, when $\mathbf{G}$ is $M\times Q$ with $M>Q$, the parity information cannot be represented as a Z-string operator due to the lack of a left inverse. Instead, Z-string representation is discarded via $Z^{\Vec{c}}P^{\Vec{b}} = (-1)^{\Vec{c}\cdot\Vec{b}}P^{\Vec{b}}$, leading to \cheng{instead} the following decomposition of $\hat{O}$:
\begin{align}
    \Re/\Im{\hat{O}} &= X^{ \Vec{a}}\sum_{\Vec{b}\in S_{\hat{O}}}(-1)^{\Vec{c}\cdot\Vec{b}}(P^{ \Vec{b}}\pm P^{ (\Vec{a}\oplus\Vec{b})}),\label{eqn: unencoded re/im xp pair}\\
    \Re/\Im{\mathcal{E}(\hat{O})} &= X^{\mathbf{G}\Vec{a}}\sum_{\Vec{b}\in S_{\hat{O}}}(-1)^{\Vec{c}\cdot\Vec{b}}(P^{\mathbf{G}\Vec{b}}\pm P^{\mathbf{G}(\Vec{a}\oplus\Vec{b})}).\label{eqn: re/im xp pair}  
\end{align}
where we denote $S_{\hat{O}} = \{\forall\vec{b}\in \mathbb{F}^M_2|H(\vec{b}) =H(\vec{a}\oplus \vec{b})= N\}$ as the set of physical state basis where $\Vec{a}\oplus \Vec{b}$ is also a physical state. For a more thorough discussion of the mappings between classical bitstrings and quantum operators, see Appendix \ref{appendix: def}. 



\section{Qubit and Measurement Complexity}
\label{section: fermionic en/decoding}
The connection between qubit operator and linear code allows us to incorporate classical \cheng{coding theory} to compute fermionic expectation values. Qubit operator encoding-decoding is translated to finding $\mathbf{G}$, which preserves $S_{\hat{O}}$, and the decoder $F: \mathbf{G}\Vec{b}\rightarrow \Vec{b}$. Parity check codes turns out to be the right candidate for this task. In our context, it distinctively encodes the conserved particle basis as fixed Hamming weight $\Vec{b}$ bit-flip errors and decodes them in polynomial time. CEC with distance $2N+1$ has a parity check matrix that distinctively encodes a number-conserved fermionic state. By applying the GV bound to the parity check code, we obtain an upper bound on the minimal size of the fermionic encoder, leading to our first result.

\begin{theorem}
The minimal qubit cost required to encode an $M$ fermionic modes $N$ electrons problem has an upper bound of $Q \leq 2N\log M$. \label{theorem: asymptotic scaling}
\end{theorem}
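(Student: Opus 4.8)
The plan is to translate the construction of the encoder $\mathbf{G}$ into a statement about the existence of a classical linear code and then invoke the Gilbert--Varshamov (GV) bound. The first step is to argue that a binary matrix $\mathbf{G}\colon\mathbb{F}_2^M\to\mathbb{F}_2^Q$ is a valid encoder exactly when it acts injectively on the Hamming-weight-$N$ sector of $\mathbb{F}_2^M$. Indeed, every bitstring appearing in the decomposition~\eqref{eqn: re/im xp pair} of $\mathcal{E}(\hat O)$ --- the states $\vec b\in S_{\hat O}$ together with their transition partners $\vec a\oplus\vec b$ --- has weight exactly $N$, because the $X$-string of an excitation operator only relocates occupations (even for the two-body terms of Eq.~\eqref{eqn: hamiltonian}, where $\vec a$ may have weight up to $4$, the string $\vec a\oplus\vec b$ still has weight $N$). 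The phases $(-1)^{\vec c\cdot\vec b}$ are evaluated on the unencoded strings and carried along, so discarding the $Z$-string loses nothing, and a decoder $\mathbf{G}\vec b\mapsto\vec b$ exists iff $\mathbf{G}$ separates all weight-$N$ vectors.

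Viewing $\mathbf{G}$ as a $Q\times M$ binary matrix $H$, the equality $H\vec b_1=H\vec b_2$ with $H(\vec b_1)=H(\vec b_2)=N$ is equivalent to $\vec b_1\oplus\vec b_2\in\ker H$, a word of Hamming weight at most $2N$. Hence weight-$N$ injectivity is equivalent to $\ker H$ containing no nonzero word of weight $\le 2N$, that is, to the linear code $C$ with parity check matrix $H$ having minimum distance $d(C)\ge 2N+1$; this is the origin of the factor $2$ relative to the information-theoretic count $\lceil\log\binom{M}{N}\rceil\sim N\log M$. The problem is now purely classical --- find the smallest redundancy $Q$ admitting a length-$M$ code of distance $\ge 2N+1$ --- and GV provides such a code whenever $\sum_{j=0}^{2N-1}\binom{M-1}{j}<2^{Q}$. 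In the regime $M\gg N$ I would bound the left side crudely: each of the at most $2N\le M$ terms is at most $M^{2N-1}$, so the sum is strictly below $M^{2N}=2^{2N\log M}$ (the sharper estimate $\sum_{j\le 2N}\binom{M}{j}\le(eM/2N)^{2N}$ leaves even more room), so $Q=\lceil 2N\log M\rceil$ satisfies the condition. A decodable encoder on $Q\le 2N\log M$ qubits thus exists (the theorem asserts existence only; a polynomial-time decoder is a separate matter, supplied by explicit GV-meeting code families), which is the claim.

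The step I expect to be the main obstacle is not the counting but making the reduction airtight --- in particular, confirming that weight-$N$ injectivity of $\mathbf{G}$ is genuinely sufficient for \emph{all} operators $\hat O$, so that the paired structure $P^{\mathbf{G}\vec b}\pm P^{\mathbf{G}(\vec a\oplus\vec b)}$ in Eq.~\eqref{eqn: re/im xp pair} is faithfully reproduced (the two images are distinct because $\vec a\neq 0$) and the decomposition contains no term that $\mathbf{G}$ could collapse. A lesser point to handle cleanly is the passage from the exact GV value $\lceil 2N\log M\rceil$ to the asymptotic form $Q\le 2N\log M$ in the statement, together with the explicit small-$M$ threshold below which the crude binomial estimate must be replaced by a direct check; both are routine once the code-theoretic reformulation is in place.
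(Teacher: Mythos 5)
Your proof is correct and follows essentially the same route as the paper: reduce validity of the encoder to injectivity of $\mathbf{G}$ on the Hamming-weight-$N$ sector, translate that into a minimum-distance condition on the classical code whose parity-check matrix is $\mathbf{G}$, and invoke the Gilbert--Varshamov bound to obtain $Q \le 2N\log M$. The only (immaterial) difference is that the paper exploits the fact that two weight-$N$ strings always differ by an \emph{even}-weight word, so it only needs to exclude even-weight kernel elements of weight $\le 2N$ (applying a GV/Hamming bound restricted to the even-weight subspace with distance $2N+2$, which saves one qubit), whereas you impose the slightly stronger full distance-$(2N+1)$ condition on $\ker\mathbf{G}$; both give the same asymptotic bound.
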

\begin{proof}
See Appendix \ref{appendix: proof}.
\end{proof}
This theorem asserts the existence of a logarithmic mode scaling linear encoder. Scalable construction of such encoders can be outsourced. Existing LDPC code packages such as AFF3CT \cite{cassagne2019aff3ct} and OpenFEC \cite{openfec} can generate not only scalable GV bound-converging encoders but also polynomial-time decoders. To study the impact of the optimal encoder on quantum chemistry problems, we introduce RLE \cheng{linear code} in Appendix \ref{appendix: RLE}. The compression rates of RLE $Q/M$ are presented in Figure \ref{Fig1} and Table \ref{t2} (Appendix \ref{appendix: numerical}).


Meanwhile, error decoding in CEC can be translated to decoding number-conserved states for quantum simulation. \cheng{However}, unlike classical decoding, the expectation values of $\hat{O}$ are decoded in multiple measurement bases, which contributes to the measurement complexity. To calculate the total measurement complexity, we note that $\Re/\Im{\hat{O}}$ in Eqn.~(\ref{eqn: re/im xp pair}) can be expanded as commuting Pauli strings measurable in one basis. It leads to the following measurement cost:
\begin{theorem}\label{thm: quantum chemistry ham scaling}
The number of Clifford bases for measuring the expectation value of a quantum chemistry Hamiltonian is upper bounded by $\mathcal{O}(M^4)$.
\end{theorem}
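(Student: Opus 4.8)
The plan is to split the count into the number of excitation operators $\hat{O}$ present in $\hat{H}$ and the number of Clifford bases each of them requires. The Hamiltonian of Eqn.~(\ref{eqn: hamiltonian}) has $\mathcal{O}(M^2)$ one-body terms $\hat{a}^\dagger_i\hat{a}_j$ and $\mathcal{O}(M^4)$ two-body terms $\hat{a}^\dagger_i\hat{a}^\dagger_j\hat{a}_k\hat{a}_l$, so it is enough to show that each Hermitian piece $\Re\,\mathcal{E}(\hat{O})$ and $\Im\,\mathcal{E}(\hat{O})$ is simultaneously measurable in a single Clifford basis; summing over all terms then yields $\mathcal{O}(M^4)$ bases, with every purely diagonal contraction (number operators and their products) collapsing to $\sigma_z$-only strings that share the single computational basis.

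The core step is to prove that the Pauli expansion of Eqn.~(\ref{eqn: re/im xp pair}) consists of mutually commuting Paulis. Write $\vec{A}=\mathbf{G}\vec{a}$ and $S=\mathrm{supp}(\vec{A})\subseteq[Q]$, so that $X^{\vec{A}}=\prod_{m\in S}\sigma_{x,m}$, and for $\vec{b}\in S_{\hat{O}}$ put $\vec{B}=\mathbf{G}\vec{b}$, giving $\mathbf{G}(\vec{a}\oplus\vec{b})=\vec{A}\oplus\vec{B}$. Because $\vec{A}$ vanishes off $S$, the pair of projectors factorizes as $P^{\vec{B}}\pm P^{\vec{A}\oplus\vec{B}}=\bigl(P_S^{\vec{B}|_S}\pm P_S^{\overline{\vec{B}|_S}}\bigr)\otimes P_{S^c}^{\vec{B}|_{S^c}}$, and the $\mathbb{F}_2$-Fourier expansion of the one-sided pair over $S$ reads $P_S^{\vec{u}}\pm P_S^{\bar{\vec{u}}}\propto\sum_{\vec{w}\subseteq S,\,|\vec{w}|\ \mathrm{even}\,(+)\,/\,\mathrm{odd}\,(-)}(-1)^{\vec{u}\cdot\vec{w}}\,\sigma_z^{\vec{w}}$. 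Multiplying by $X^{\vec{A}}$ converts each $\sigma_z^{\vec{w}}$ into $(-i)^{|\vec{w}|}$ times the string that is $\sigma_y$ on $\mathrm{supp}(\vec{w})$ and $\sigma_x$ on $S\setminus\mathrm{supp}(\vec{w})$; the parity restriction on $|\vec{w}|$ makes this scalar real for the real part and imaginary for the imaginary part, consistent with the overall prefactor. Hence every Pauli string occurring in $\Re\,\mathcal{E}(\hat{O})$ (resp. $\Im\,\mathcal{E}(\hat{O})$) is $\sigma_x/\sigma_y$ on all of $S$ with an even (resp. odd) number of $\sigma_y$'s, and $\sigma_z$ on some subset of $S^c$, uniformly over all $\vec{b}\in S_{\hat{O}}$ — varying $\vec{b}$ only rescales coefficients and changes the $\sigma_z$-pattern on $S^c$.

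Commutation is then immediate: two such strings commute on $S^c$ because both carry only $\sigma_z$ there, and on $S$ they anticommute exactly on the set where their $\sigma_y$-supports $\vec{w}_1,\vec{w}_2$ differ, namely $\mathrm{supp}(\vec{w}_1\oplus\vec{w}_2)$, whose cardinality $|\vec{w}_1|+|\vec{w}_2|-2|\vec{w}_1\wedge\vec{w}_2|$ is even since $|\vec{w}_1|$ and $|\vec{w}_2|$ share the same parity; an even number of anticommuting sites means the strings commute. A set of pairwise-commuting Paulis is simultaneously diagonalized by a Clifford circuit, so each of $\Re\,\mathcal{E}(\hat{O})$ and $\Im\,\mathcal{E}(\hat{O})$ needs only one Clifford basis. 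This reasoning never uses any property of $\mathbf{G}$ beyond linearity — $\mathbf{G}$ merely replaces the ambient dimension $M$ by $Q$ — so the bound holds for every linear encoder, and combining with the operator count completes the $\mathcal{O}(M^4)$ estimate.

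I expect the main obstacle to be the commutation bookkeeping: tracking the $\sigma_y$-count parity through multiplication by the $X$-string and confirming that this parity is common to \emph{every} Pauli in the expansion of a fixed real or imaginary part — including the contributions from distinct $\vec{b}\in S_{\hat{O}}$ — so that the ``symmetric difference is even'' argument applies without exception. A secondary point needing care is the treatment of coincident orbital indices in the two-body terms (e.g. $i=k$), where $\vec{a}$ has reduced Hamming weight and $S$ shrinks accordingly but the structural argument is unchanged, together with checking that all diagonal contractions indeed reduce to $\sigma_z$-only strings and hence contribute only the single computational basis rather than new ones.
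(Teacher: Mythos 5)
Your proposal is correct, and the key step is argued by a genuinely different route than the paper's. The paper proves single-basis measurability (its Lemma \ref{lemma: one measurement basis}) by exhibiting an explicit CNOT circuit $\hat{C}$ that conjugates $X^{\mathbf{G}\vec{a}}$ to a single $X_B$ and sends the conjugate projector pair to $P^0_B P^{\vec{m}}$ and $P^1_B P^{\vec{m}}$, so that commutativity is manifest in the transformed frame; that same circuit doubles as the measurement basis and as the classical map $\mathbf{A}$ reused in the FED decoder. You instead expand $P^{\vec{B}}\pm P^{\vec{A}\oplus\vec{B}}$ directly into Pauli strings via the $\mathbb{F}_2$-Fourier transform, observe that every string is $\sigma_x/\sigma_y$ on all of $\mathrm{supp}(\mathbf{G}\vec{a})$ with a fixed parity of $\sigma_y$'s and $\sigma_z/I$ elsewhere, and conclude pairwise commutation from the evenness of $|\vec{w}_1\oplus\vec{w}_2|$. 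This is more elementary and self-contained (it never constructs the circuit, only invokes the standard fact that pairwise-commuting Paulis are simultaneously Clifford-diagonalizable), and it cleanly handles cross-terms between distinct $\vec{b}\in S_{\hat{O}}$, which the paper leaves somewhat implicit. The counting also differs: you assign one basis per Hamiltonian term, giving $\mathcal{O}(M^2)+\mathcal{O}(M^4)$, whereas the paper groups terms by their (permutation-invariant) $X$-string to obtain the sharper constant $1+\binom{M}{2}+\binom{M}{4}$ — your own commutation argument in fact already justifies that grouping, since all terms sharing a support $S$ produce strings of the same $X/Y$-on-$S$, $Z$-off-$S$ form, so you could recover the tighter bound with no extra work. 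Both counts give the stated $\mathcal{O}(M^4)$.
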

\begin{proof}
See Appendix \ref{appendix: measurement cost}.
\end{proof}
\cheng{Upon sampling in the compressed measurement basis},
FED \cite{FED_code} is introduced in Algorithm \ref{alg2} to classical post-process the probability distributions of the projectors $X^{\mathbf{G}\Vec{a}}(P^{\mathbf{G}\Vec{b}}\pm P^{\mathbf{G}(\Vec{a}\oplus\Vec{b})})\rightarrow X^{ \Vec{a}}(P^{ \Vec{b}}\pm P^{ (\Vec{a}\oplus\Vec{b})})$, where the parity $(-1)^{\Vec{c}\cdot\Vec{b}}$ is subsequently computed.

\begin{algorithm}
    \caption{Fermionic Expectation Decoder}
    \label{alg2}
    \begin{algorithmic}
    \Require
      Probability distribution of projector variable $P(\Vec{d}\in \mathbb{F}^Q_2)$, $X$-string vector $\Vec{a}$ of $\hat{O}$, $Z$-string vector $\Vec{c}$ of $\hat{O}$, Parity Check Matrix $\mathbf{G}$, Minimal Weight Decoder $F(\cdot)$
    \Ensure
        Fermionic Expectation Value $\langle\hat{O}\rangle = \text{val}$
    \State
    $\Vec{e} \leftarrow \Vec{0}\in \mathbb{F}^Q_2$;
    \State
    $k \leftarrow \min{\{k, \Vec{a}[k] = 1\}}$;
    \State
    $\Vec{e}[k]\leftarrow 1$;
    \State
    Find $\mathbf{A}$ s.t. $\mathbf{A}\Vec{e} = \mathbf{G}\Vec{a}$;
    \State
    $\text{val} = 0$;
    \For{$P(\Vec{d}) \neq 0$}    
        \State
        $\Vec{m}\leftarrow \mathbf{A}\Vec{d}$;
        \State
        $\Vec{b}\leftarrow F(\Vec{m})$;
        \State
        Post-Select $\Vec{b}$ which obeys the number conservation constraint.
        \State
        $s\leftarrow (-1)^{\Vec{b}\cdot\Vec{c}}*(-1)^{\Vec{d}[k]}$;
        \State
        $\text{val}\leftarrow \text{val}+P(\Vec{d})*s$;
    \EndFor
    \State
    \Return $\text{val}$
    \end{algorithmic}
\end{algorithm}

\cheng{Compared to Ref. \cite{shee2022qubit}, FED avoids computing expectation value in the encoded subspace. It is compatible with all linear encoders, allowing resource leverages between the qubit and gate complexity while keeping the measurement efficient.} 







\section{Gate Complexity}
\label{section: gate complexity}

The RLE encoder and its associated classical decoder have a computational complexity of $\mathcal{O}(M^N)$. In practice, they are replaced with polynomial encoder-decoder constructions that attain the GV bound, such as the Goppa code and Gallager's LDPC code. FED is fully scalable when combined with optimal CEC.


Such scalability is extended to the encoding of fermionic gate. In other words, the encoder-decoder formalism from error correction code also contributes directly to the gate complexity. We observe that the complexity growth is directly dual to the construction of QEC decoder, leading to the duality between QEC and QS. In what follows, we will formulate fermionic gate construction in the language of QEC decoder, and demonstrate its implementation in the language of decoders.

\subsection{QS-QEC Duality}
\begin{figure*}[htbp] 
    \centering
    \subfloat[Circuit illustration of single round QEC.]{%
    $$\Qcircuit @C=1.5em @R=0.7em {
    \ket{QEC}&&\gate{\text{E}}&\ctrl{1}&\qw&\gate{\texttt{Decode}}&\qw&\qw&\qw \\
    \ket{0}&&\qw&\gate{\texttt{Encode}}&\meter&\control \cwx[-1] \cw&\ket{0}&&\qw
    }
    $$
    \label{fig:QEC}%
    }
    \hspace{4em}
    \subfloat[General subroutine for implementing fermionic operators on linearly compressed fermionic state.]{%
    $$
    \Qcircuit @C=1.5em @R=0.7em {
    \ket{s}&&\gate{\texttt{Decode}}&\ctrl{1}            &\gate{\texttt{Decode}}&\qw&\ket{s}&&\qw \\
    \ket{QS}&&\ctrl{-1} &\gate{\texttt{Encode}}&\ctrl{-1} \qw&\qw&\qw&\qw&\qw
    }
    $$
    \label{fig:QS}%
    }
    \caption{QS-QEC duality. In QEC code, error channel on the $\ket{QEC}$ register is followed by error encoding and decoding after measurement, and the ancillary is reset to $\ket{0}$ state to be reused. The structure of error encoding and decoding finds its analogy in QS for gate operation, except that all operations are coherently implemented and two rounds are decoding is required to decouple $\ket{s}$, known as un-computation in quantum reversible logic. Uncomputation allows one to reset the ancillary qubits, contrasting the nature of QS and QEC.}
\end{figure*}

QEC involves error encoding via controlled gates, syndrome measurement, and error decoding, see Figure \ref{fig:QEC}.
In QS, the roles of the first and second registers are reversed. Number-conserved configurations are coherently stored as 'errors' in the second register. Rather than encoding errors, the encoder now maps quantum gate operations onto the parity-check qubits. Similarly, the decoder not only removes 'error' information from the first register but also decodes these configurations in the Jordan–Wigner (JW) basis. This duality leads to the fermionic gate operation picture shown in Figure \ref{fig:QS}.

In both cases, decoder complexity determines the scalability of the encodings. There are key differences between these protocols. In QS, the ancillary qubits $\ket{s}$ cannot always be the $\ket{0}$ state. Measurement and reset is forbidden to preserve the coherence, the decoder needs not to decode in the uncompressed number-conserved basis, and the decoder only needs to partially decode the necessary bit to encode gate operation. However, as we shall show, classical decoder from QEC (Figure \ref{fig:QEC}) can be directly quantized and mapped to (Figure \ref{fig:QS}), ultimately allowing us to characterize fermionic gate complexity with classical decoder complexities.

\subsection{Partial Decoding}

Although QEC protocol is designed to eliminate a complete set of quantum errors, such as single qubit bit-flip errors, syndrome measurements often reveal only \textit{local} errors that require local error corrections via partial classical decoders. Similarly, decoding local errors corresponds to local fermionic gate operations in QS. These processes do not require full knowledge of the number-conserved basis. The information needed depends on the particular gate one aims to implement. In fermionic simulation, the relevant operations are single and double excitation operators, which admit the following decomposition in the JW basis:
\begin{align}
    \hat{a}^\dagger_i\hat{a}_j\pm\hat{a}^\dagger_j\hat{a}_i &=
    X_iX_j\nonumber \\
    &\times \prod_{i<m<j}Z_m\nonumber \\
    &\times(P^0_iP^1_j\pm P^1_iP^0_j)   
    \label{eqn: single excitations}\\
\hat{a}^\dagger_i\hat{a}_j^\dagger\hat{a}_k\hat{a}_l\pm\hat{a}^\dagger_l\hat{a}^\dagger_k\hat{a}_j\hat{a}_i &= X_iX_jX_kX_l\nonumber \\
    &\times\prod_{i<m<j}Z_m\prod_{k<n<l}Z_n\nonumber \\
    &\times(P^0_iP^0_jP^1_kP^1_l\pm P^1_iP^1_jP^0_kP^0_l).
    \label{eqn: double excitations}
\end{align}
In the post-processing of FED, $X$-strings in the first row of Eqn.~(\ref{eqn: single excitations}) and (\ref{eqn: double excitations}) requires no decoding. Meanwhile, $Z$-strings in the second rows and the $P$ operators in the third rows requires decodings before expectation values are computed. Utilizing the QS-QEC duality, we recognize that logical gate operations shares the same complexity composition as FED -- decode and compute. The $P$ (third rows) operators are first decoded to select the fermionic modes ($i, j$ in Eqn.~(\ref{eqn: single excitations}) and $i, j, k, l$ in Eqn.~(\ref{eqn: double excitations})) for transitions induced by the $X$ and $Z$-strings. These decoded modes, denoted as $\vec{b}[i]\vec{b}[j]$ and $\vec{b}[i]\vec{b}[j]\vec{b}[k]\vec{b}[l]$, then computes fermionic transitions $\ket{\mathcal{E}(\vec{b})}\rightarrow \ket{\mathcal{E}(\vec{b}\oplus\vec{a})}$. Meanwhile, the decoded $Z$-strings (second rows), by summing over $\mathcal{O}(M)$ fermionic modes in the JW basis, compute the fermionic signs $\ket{\mathcal{E}(\vec{b})}\rightarrow \pm\ket{\mathcal{E}(\vec{b})}$.



For linearly encoded state, the summation and selection of $Z$ and $P$ operators requires distinctive decoder for fermionic mode $a$, and this necessitates $M$ single mode fermionic decoders to implement fermionic gate acting on arbitrary fermionic modes:
\[
\Qcircuit @C=0.5em @R=0.7em {
&\rho&&& \qw & \gate{D_{a}} & \qw &&&\times M\\
\ket{\mathcal{E}(\psi)}&&&& \qw & \ctrl{-1} & \qw &
}
\]
where $\rho$ is a single qubit density matrix and $a$ denotes the fermionic mode decoded from the linearly encoded multi-qubit state $\ket{\mathcal{E}(\psi)}$ on the second register. We note that these set of decoders are dual to QEC decoders where each error syndrome too corresponds to a unique decoding. Up next, we will demonstrate the constructions and the corresponding complexity of these single mode decoders, and subsequently, fermionic gate construction. 

\subsection{Decoders Construction}


Decoder from Figure \ref{fig:QEC} can be directly quantized to quantum decoder in \ref{fig:QS}, summarized as follow:
\begin{proposition}
    Classical linear decoder algorithms find direct quantum circuit representation.
\end{proposition}
\begin{proof}

By the virtue of Bennett's Theorem, any classical algorithm can be made reversible with at most polynomial overhead in space and time using ancillary bits \cite{Bennett1989}.

Next, due to deferred measurement principle \cite{Gurevich2021}, measurement probability distribution remains the same if one defers the measurement after a reversible classical logical gate operation, see Figure \ref{fig:quantization}. This property corresponds to an isomorphism between reversible classical logic to coherent quantum gate operations when processing unentangled classical bits. That is, for a general reversible classical logic gate $f: \mathbb{F}^A_2\rightarrow \mathbb{F}^A_2$, provided that it can be decomposed into universal reversible classical gate set, such as classical multi-control gates, it finds a direct quantum gate representation.

Since classical single mode decoders $\mathcal{D}_i(\cdot)$ may be decomposed as classical reversible logic (which is further decomposed into multi-control gates), for a classical single bit decoder:
\begin{equation}
    \mathcal{D}_i(\mathcal{E}(\vec{b})) = \mathcal{E}(\vec{b})\otimes\vec{b}[i]
\end{equation}
the same classical circuit $\hat{\mathcal{D}}_i(\cdot)$ can be directly implemented on quantum bits:
\begin{equation}
    \hat{\mathcal{D}}_i\left(\ket{\mathcal{E}(\vec{b})}\right) = \ket{\mathcal{E}(\vec{b})}\otimes\ket{\vec{b}[i]}
\end{equation}
It immediately follows that:
\begin{equation}
    \hat{\mathcal{D}}_i\left(\sum_{j}c_j\ket{\mathcal{E}(\vec{b}_j)}\right) = \sum_jc_j\ket{\mathcal{E}(\vec{b}_j)}\otimes\ket{\vec{b}_j[i]}
\end{equation}
due to linearity of quantum circuit.  
\end{proof}

Therefore, we have proved the existence of polynomial time quantum decoder for implementing single mode fermionic decoders. In fact, this result allows us to analyze any encoding (beyond linear) quantum gate complexity in terms of classical decoder complexity. The optimization of fermionic gate complexity boils down to the optimization of classical decoder complexity. This result unifies previous frameworks of second-quantized fermionic encodings, and provides a standardized mean to study qubit-gate-measurement complexity trade-offs in the language of decoders. 

\begin{figure}
\[
\Qcircuit @C=0.5em @R=0.7em {
&\qw&\ctrl{1}&\qw&\meter&&&\meter&\cctrl{1}&\cw&&\\
&\qw&\ctrl{1}&\qw&\meter&&&\meter&\cctrl{1}&\cw&&\\
&&&&&&&&&&&\\
&&\vdots&& &=& &&\vdots&&&\\
&&&&&&&&&&&\\
&&&&&&&&&&&\\
&\qw&\targ{0}\qwx[-1]&\qw&\meter&&&\qw&\targ{0}\cwx[-1]&\qw&\meter&\\
&\qw&\targ{0}\qwx[-1]&\qw&\meter&&&\qw&\targ{0}\cwx[-1]&\qw&\meter&\\
}
\]
    \caption{Deferred measurement principle for multi-control gate. It demonstrates direct correspondence between classical reversible circuit to quantum circuit.}
    \label{fig:quantization}
\end{figure}

\subsection{Decoding Subroutines}

Based on single mode fermionic decoders, we demonstrate subroutines that implement fermionic operators in the compressed subspace. We focus on projector $P$ decoding which enables selection of specific fermionic configurations required for the gate transitions, and Z-strings decoding which determines the sign structure in the JW basis. The latter can be implemented straightforwardly via the subroutine illustrated in Figure \ref{fig:sign-decoder}.
\begin{figure}
\[
\Qcircuit @C=0.5em @R=0.7em {
    & \qw & \gate{\hat{\mathcal{D}}_{\text{sign}(Z^{\vec{c}})}} & \qw & \quad = \quad &&&&
    & \ket{0} &&&& \gate{\hat{\mathcal{D}}_{a}} & \gate{\hat{\mathcal{D}}_{b}} & \qw && \dots &&& \gate{\hat{\mathcal{D}}_{z}} & \qw \\
    & \qw & \ctrl{-1} & \qw & &
    &&&& \ket{\mathcal{E}(\psi)} &&&& \ctrl{-1} & \ctrl{-1} & \qw && \dots &&& \ctrl{-1} & \qw \\
}
\]
\caption{Fermionic sign decoder circuit for the binary vector of the Z-string $Z^{\vec{c}}$ built from single mode fermionic decoders. The indices $a, b, ..., z$ denotes the non-0 entries of binary vectors $\vec{c}$.}
\label{fig:sign-decoder}
\end{figure}
The signs of mode from $a$ to $z$ are decoded and summed together coherently from the encoded state. After fermionic gate operation, the sign decoded fermionic modes remain the same, such that a second round decoding, as illustrated in Figure \ref{fig:QS}, un-compute the coherence on the ancillary qubit. 

Meanwhile, the decoder construction of the $P$ projector is less straight forward. In Eqn.~(\ref{eqn: single excitations}) and (\ref{eqn: double excitations}), decoded fermionic information of $P$ projectors is accompanied with transitions on fermionic mode $i,j$ and $i, j, k, l$ via $X_iX_j$ and $X_iX_jX_kX_l$, representing a pairs of states which we wish to decode to. These pairs of state undergoing transition must therefore be decoded to the same quantum state (indistinguishable to controlled state), and that the controlled state contains enough information to induce the right transition. This is achieved by preparing $\ket{s}$ as the Greenberger-Horne-Zeilinger (GHZ) state on the fermionic modes which required decoding. Suppose we have an pair of encoded fermionic states $\ket{\mathcal{E}(\vec{b})}, \ket{\mathcal{E}(\vec{b}\oplus\vec{a})}$ such that in our notation:
\begin{equation}
    X^{\vec{a}} = X_iX_j,
    \label{eqn: XX transition}
\end{equation}
acting on mode $i,j$, and a GHZ state on the same modes $\frac{\ket{0_i0_j}+\ket{1_i1_j}}{2}$. With decoder $\hat{\mathcal{D}}_i$ and $\hat{\mathcal{D}}_j$, if $\ket{\mathcal{E}(\vec{b})}$ is decoded to the value $m_in_j$, then $\ket{\mathcal{E}(\vec{b}+\vec{a})}$ is decoded to $(m\oplus1)_i(n\oplus 1)_j$ in the computational basis due to Eqn.~(\ref{eqn: XX transition}). However, when decoding the information on the GHZ state, we obtain the following:
\begin{align}
    m_in_j&\rightarrow\frac{\ket{(m\oplus 0)_i(n\oplus 0)_j}+\ket{(m\oplus 1)_i(n\oplus 1)_j}}{2}\\
    (m\oplus1)_i(n\oplus 1)_j&\rightarrow \frac{\ket{(m\oplus 1)_i(n\oplus 1)_j}+\ket{(m\oplus 0)_i(n\oplus 0)_j}}{2}.
\end{align}
That is, the resulting pair of GHZ states are indistinguishable despite of the outputs on the l.h.s. of the equation. Similar argument can be made with four modes decoding for double excitations. In that case we just prepare the state $\frac{\ket{0_i0_j0_k0_l}+\ket{1_i1_j1_k1_l}}{2}$ on the ancillary qubits. Figure \ref{cir:P-transition} illustrates double excitation $P$ decoding subroutine, where

\begin{figure}

$$
\Qcircuit @C=0.5em @R=0.7em {
\ket{0}&&\multigate{3}{GHZ}&\multigate{3}{\hat{\mathcal{D}}_{i, j, k, l}}&\ctrl{1}&\qw&\qw&\qw&\qw&\qw&\ctrl{1}&\multigate{3}{\hat{\mathcal{D}}_{i, j, k, l}}&\qw \\
\ket{0}&&\ghost{GHZ}&\ghost{\hat{\mathcal{D}}_{i, j, k, l}}&\targ{}&\ctrl{1}&\qw&\ctrl{1}&\qw&\ctrl{1}&\targ{}&\ghost{\hat{\mathcal{D}}_{i, j, k, l}}&\qw\\
\ket{0}&&\ghost{GHZ}&\ghost{\hat{\mathcal{D}}_{i, j, k, l}}&\qw&\targ{}&\ctrl{1}&\ctrlo{1}&\ctrl{1}&\targ{}&\qw&\ghost{\hat{\mathcal{D}}_{i, j, k, l}}&\qw\\
\ket{0}&&\ghost{GHZ}&\ghost{\hat{\mathcal{D}}_{i, j, k, l}}&\qw&\qw&\targ{}&\ctrl{1}&\targ{}&\qw&\qw&\ghost{\hat{\mathcal{D}}_{i, j, k, l}}&\qw\\
\ket{\mathcal{E}(\psi)}&&&\ctrl{-1}&\qw&\qw&\qw&\gate{e^{i\mathcal{E}(X^{\vec{a^*}})\theta}}&\qw&\qw&\qw&\ctrl{-1}&\qw 
}
$$
(a)

\[
\Qcircuit @C=0.5em @R=0.7em {
&\multigate{3}{\hat{\mathcal{D}}_{i, j, k, l}}&\qw&&&\gate{\hat{\mathcal{D}}_i}&\qw&\qw&\qw&\qw\\
&\ghost{\hat{\mathcal{D}}_{i, j, k, l}}&\qw&&&\qw&\gate{\hat{\mathcal{D}}_j}&\qw&\qw&\qw\\
&\ghost{\hat{\mathcal{D}}_{i, j, k, l}}&\qw&=&&\qw&\qw&\gate{\hat{\mathcal{D}}_k}&\qw&\qw\\
&\ghost{\hat{\mathcal{D}}_{i, j, k, l}}&\qw&&&\qw&\qw&\qw&\gate{\hat{\mathcal{D}}_l}&\qw\\
&\ctrl{-1}&\qw&&&\ctrl{-4}&\ctrl{-3}&\ctrl{-2}&\ctrl{-1}&\qw
}
\]
(b)
\caption{(a) Building block of fermionic double excitations built from single mode fermionic decoders. (b) Projector decoder based on Eqn.~(\ref{eqn: double excitations}).}
\label{cir:P-transition}
\end{figure}
\begin{equation}
    X^{\vec{a^*}} = X_iX_jX_kX_l.
\end{equation}
In this subroutine, $\ket{s} = \ket{GHZ}$, $\texttt{Decode} = \hat{\mathcal{D}}_{i,j,k,l}$ and circuit in between $\hat{\mathcal{D}}_{i,j,k,l}$ is the $\texttt{Encode}$ subroutine in Figure \ref{fig:QS}. The $\texttt{Encode}$ subroutine implements controlled rotation if the decoded state is $\frac{\ket{1_i1_j0_k0_l}+\ket{0_i0_j1_k1_l}}{2}$, corresponding to the $P$ projector in Eqn.~(\ref{eqn: double excitations}). The efficiency of one-to-one mapping of the X-strings (Eqn.~(\ref{eqn: Pauli-transformation})) is reflected the single controlled gate when implementing transition.

In Appendix \ref{appendix: gate complexity}, we will demonstrate time evolution encodings via $P$ and $Z$ decoding subroutine, and additionally discuss the worst case overhead when implementing gate complexity without fermionic decoders. 



\subsection{Near-Term Complexity}

Although QS-QEC duality demonstrates thoroughly complementary qubit-gate-measurement complexity contribution in the language of classical/quantum decoders, a complete fermionic simulation encoding only finds application in early FTQC regime. For near-term, our protocol can implement quantum algorithms without fermionic gates, as it is done in our VQE benchmarks. \cheng{Beside HEA ansatz, our protocol is compatible with the operator selection techniques in qubit ADAPT-VQE, which encodes the Hamiltonian information on the qubit operators rather than fermionic operators. In Appendix \ref{appendix: adapt-vqe}, we introduce a new algorithm which performs operator selection in the encoded subspace. The numerics illustrates the advantages of reducing operator actions in the compressed qubit space, suggesting potential benefits in minimizing the operator pools for implementing qubit ADAPT-VQE. Our results provide more effective ground state preparation strategies, expanding the scope of Variational Quantum Algorithms.}


\section{Numerical Results}

\begin{figure}
    \centering
    \includegraphics[width=1.0\linewidth]{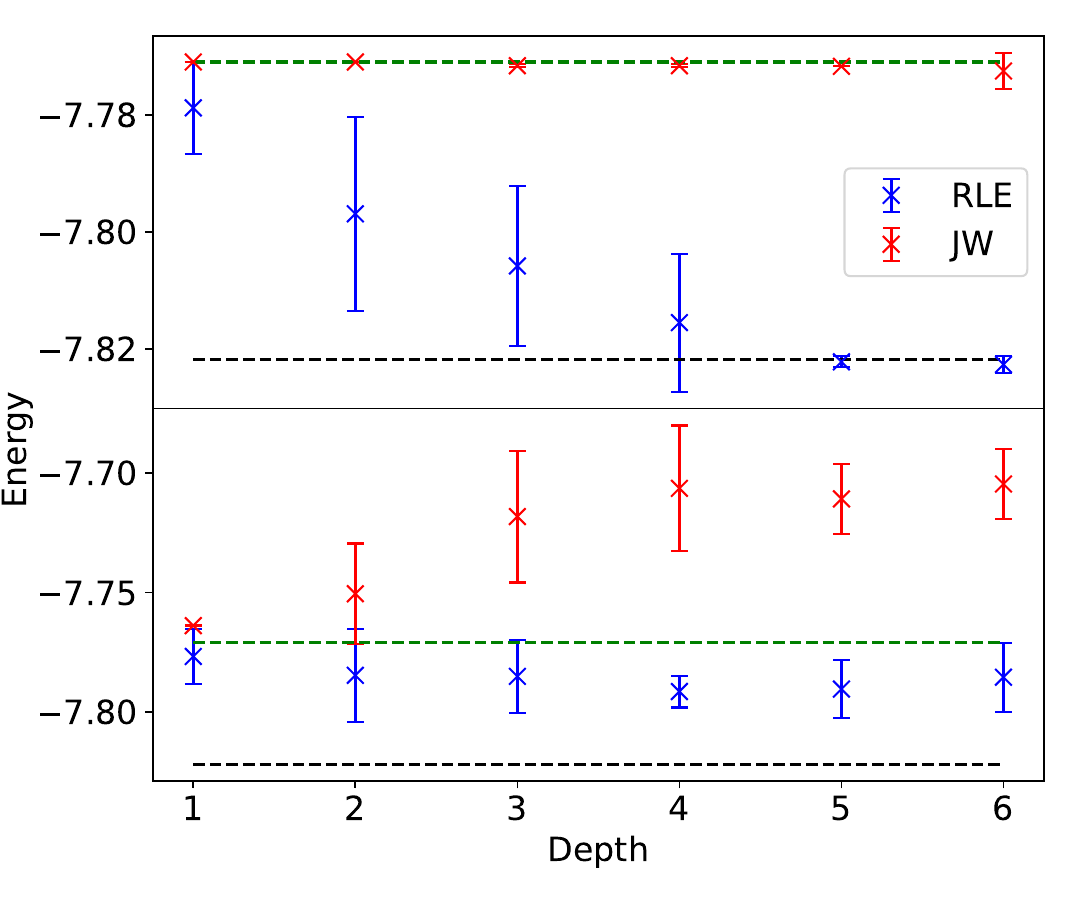}
    \caption{VQE energy of STO-3G LiH at 2.5 Å w/o compression using HEA at varying depth. The JW encoded noiseless energy (top red) cannot find energy better than the HF energy (green), whereas the RLE energy (blue) converges to chemical accuracy (black) at the 5th layer (25 CNOT). In the \texttt{ibm-sherbrooke} noisy simulation, the RLE energy shift is significantly less than the JW energy which diverges from the HF energy. }
    \label{fig: lih}
\end{figure}

\cheng{We benchmark the RLE-FED protocol on VQE trained with the L-BFGS-B optimizer, HEA ansatz and Hartree-Fock initialization. The impact of compression on gate resources and convergence w/o noise is investigated. In Figure \ref{fig: lih}, LiH molecule \cite{Rice_2021} in STO-3G compressed from $M=8, N = 2$ to $Q=6$ qubits is compared against the JW encoded LiH; in Figure \ref{Fig3}, the dissociation curve of $\text{H}_2$ molecule \cite{PhysRevLett.104.030502} in the 6-311G* basis compressed from $M = 12, N = 2$ to $Q=7$ qubits is simulated to studied its gate resource requirement. All data points are averaged over 30 instances of gradient descent to account for the randomness of our encoder and the quantum noise.}


Noiseless RLE simulated LiH energy converges to chemical accuracy ($<$1kcal/mol) with 30 local CNOT gates, with the noisy part converged to lower energy than the HF state. On the other hand, the JW encoded noiseless/noisy LiH simulation requires more entangling gates but fails to show any signs of convergence. Meanwhile, using as few as 66 local CNOT gates all $\text{H}_2$ energy points converge to chemical accuracy, compared to the JW encoded $\text{H}_2$ which requires 11 CNOT gates at least per layer, and double excitations which requires at least 13 CNOT gates. Training the JW encoded $\text{LiH}$ and $\text{H}_2$ molecules with HEA or the UCCSD ansatz often requires significantly more CNOT gates to achieve the same accuracy.
Subspace encoding allows state preparation with substantially less coherence time and consequently less circuit depth \cite{barren_plateau}. This is further supported with the benchmarks of 16-mode 6-31G $\text{LiH}$ in Tables \ref{t2} and \ref{t3} in Appendix \ref{appendix: numerical}.
\begin{figure}[!htb] 
    \centering
    \includegraphics[width=0.50\textwidth]{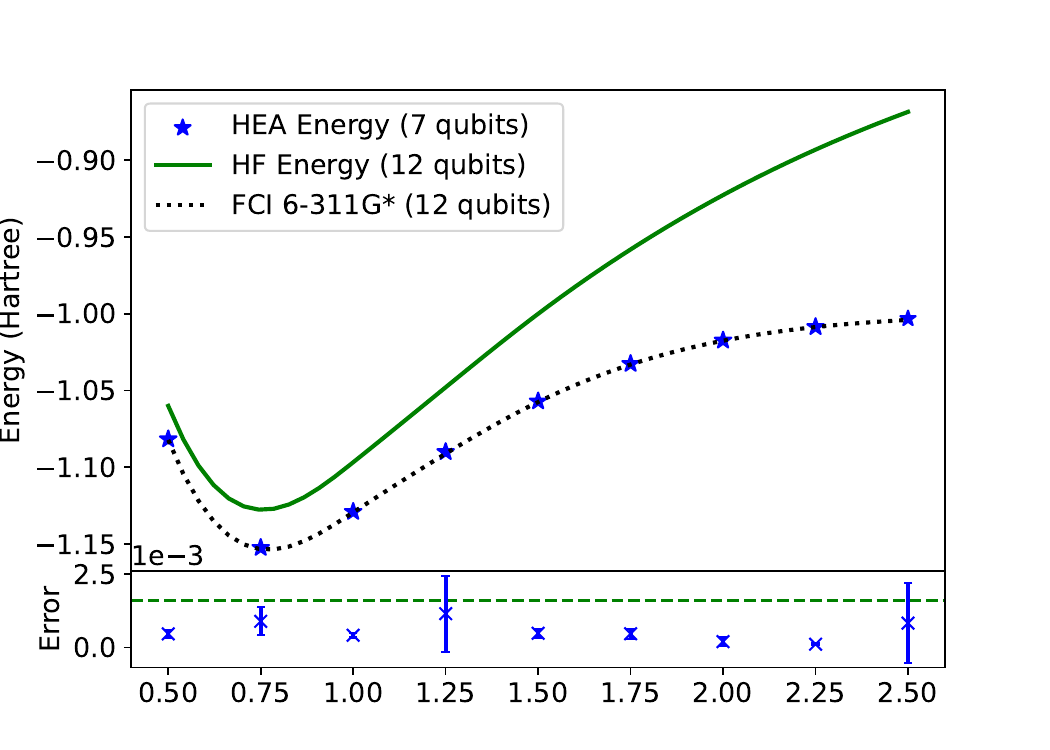}
    \caption{Noiseless VQE potential energy curve of 6-311G* (12 modes) $\text{H}_2$ potential energy curves encoded with 7 qubits using HEA with 11 layers (66 CNOT). The energy uncertainty is displayed at the bottom corresponds which lies within the chemical accuracy threshold (green dot).}
    \label{Fig3}
\end{figure}



Overall, successful ansatz training on the RLE encoded circuit demonstrates significantly fewer CNOT gates, and consequently, the spatial and temporal aspects of the quantum circuits are less prone to noise than the uncompressed case. Noisy energy variance shows that the compressed problem also produces a stable result. \cheng{While the compressed HEA simulation results does not directly imply scalability in VQE, our protocol is compatible to scalable VQE algorithm such as qubit ADAPT-VQE \cite{Qubit-Adapt-VQE}. See Appendix \ref{appendix: adapt-vqe} for implementation. Exponential improvement on the scaling of fermionic modes relieve the problem of barren plateau contributed from qubit scaling, and enhances the performance of qubit ADAPT-VQE. }


\section{Conclusion}

In this manuscript, we have \cheng{proposed the use of parity check matrix and FED for encoding and decoding number-conserved problems in quantum simulation}. We established a connection between number-conserved linear encoding and the parity check matrix of a dual CEC code, leading to new insights for constructing number-conserved encoding. Leveraging this duality, we showed that the optimal number-conserved linear encoding is bounded above by $2N\log M$. Our proposal also prepared efficient measurement bases for \cheng{number-conserved} observables, including the 2-RDM for quantum chemistry, with a measurement complexity upper bound of $1+\binom{M}{2}+\binom{M}{4}$. These results can be applied to construct encoding with optimal qubit-to-measurement trade-off. \cheng{Parity check code with polynomial time decoder are available in the Geometric Goppa and Gallager's LDPC codes. Meanwhile, FED offers a univeral protocol to decode fermionic information for all linear encodings in second-quantized simulation, enabling trade-offs between the qubit and gate complexity.}

To benchmark optimal codes, we constructed the RLE encoding and FED decoding algorithms. We compared VQE simulations of LiH in the RLE and JW encoded STO-3G/6-31G basis, and $\text{H}_2$ in the RLE encoded 6-311G* basis using HEA at different circuit depths. We conducted a VQE experiment on $\text{H}_2$ in the 6-311G* basis against full configuration interaction (FCI), demonstrating chemical accuracy across all bond lengths. All results favored the optimal encoding in terms of CNOT cost and convergence, which showcased the potential benefits of implementing VQE with our protocol. \cheng{We propose a new subroutine based on our protocol to implement qubit ADAPT-VQE in the encoded subspace, further reliving the resource requirement for variational quantum simulation of number-conserved quantum systems.}

\cheng{The duality between number conservation and error correction extends beyond FED. We show that QEC decoding is dual to simulating quantum gates in the compressed number-conserved subspace. We proposed a generalized framework on scalable fermionic gate encoding, echoing the results in Ref. \cite{qec-autoencoder}. This result opens the doors for designing sparse access oracles with technology in QEC decoders, expanding the current tool set of quantum simulation in FTQC regime and allowing flexible resource trade-offs when encoding number-conserved systems, while ensuring scalability.}

Overall, the connection between CEC and number con-
servation has implications for resource trade-offs in quan-
tum simulation, both in the NISQ and FTQC regimes.
Our example demonstrates that this connection can be
leveraged to reduce quantum resource requirements for
variational quantum algorithms. We leave as an open question how these resource trade-offs can be further exploited in the encoding of more general quantum algorithms.

\textit{Acknowledgement} -- AH gratefully acknowledges the sponsorship from Research Grants Council of the Hong Kong Special Administrative Region, China (Project No. CityU 11200120), City University of Hong Kong (Project No. 7005615, 7006103), and CityU Seed Fund in Microelectronics (Project No. 9229135). MSK thanks the Samsung GRC project and the UK EPSRC (EP/@032643/1 and EP/Y004752/1). This work was supported by the project AnQuC-3 of the Competence Center Quantum Computing Rhineland Palatinate (Germany). All data and code supporting the findings of this study are openly available at Ref.~\cite{FED_code}.

\bibliography{linear.bib}

\appendix

\section{Framework for number-conserved Linear Encoding }





We provide a detailed discussion of the dual code correspondence that connects number-conserved linear encoding with classical error-correcting codes, inspired by Ref. \cite{bravyi2017tapering}.

In this work, we make use of a $Q \times M$ parity check code $\mathbf{G}$ of an $(M-Q) \times M$ CEC $\mathbf{C}$ to compress the number-conserved configurations. $\mathbf{G}$ is defined as the dual code of $\mathbf{C}$:
\begin{equation}
    \mathbf{G}\mathbf{C}^T = \mathbf{0}.
    \label{eqn: duality}
\end{equation}
Due to this duality, constraints on $\mathbf{P}$ can be directly translated to constraints in $\mathbf{C}$. With respect to $\mathbf{G}$, we can think of $\mathbf{C}$ as the generator of equivalence class $C$ on the $M$-dimensional binary vector space $\mathbb{F}^M_2$:
\begin{equation}
C = \{\vec{v} \in \mathbb{F}^M_2 \mid \forall \vec{c} \in \mathbb{F}^{M-Q}_2, \vec{v} = \mathbf{C}^T\vec{c}\} = \ker \mathbf{G}.
\end{equation}
This class $C$ is the kernel $\ker \mathbf{G}$ of the encoded space. $\ker \mathbf{G}$ also defines the equivalence class of all encoded bitstrings $\vec{b} \in \mathbb{F}^M_2$:
\begin{equation}
\vec{b} \sim \vec{b} \oplus \vec{v} \text{ for } \vec{v} \in C.
\end{equation}
Let us define $\vec{b} \oplus C$ to be the equivalence class of $\vec{b}$. To ensure that all configurations are distinctly encoded, number-conservation demands that each bitstring with Hamming weight $N$ shall only appear once in $\vec{b} \oplus C$, which constrains the Hamming weight of $C$ and thus its generator $\mathbf{C}$, the error-correcting code. If $\mathbf{C}$ generates a subspace of even Hamming weights with code distance $2N+2$, then for the subspace with Hamming weight less than or equal to $N$, $\mathbf{G}$ will distinctively encode them into the encoded space. Proof of this statement is in Lemma \ref{thm: criterion for successful encoding}. On the other hand, we can generate arbitrary odd equivalence classes because they do not contribute to mapping bitstrings with the same parity.

As we will show in the proof of Theorem \ref{theorem: asymptotic scaling}, this is equivalent to saying that we would like our $\mathbf{C}$ to have one unconstrained row element with odd Hamming weight. As for the rest of the rows, they form a code with even Hamming weight and distance $2N+2$. Thus, $\mathbf{C}$ contains an error-correcting subcode of dimension $(M-Q-1) \times M$ with even Hamming weight and distance $2N+2$ for us to construct a valid $Q \times M$ number-conserved linear encoding with code parameters $[M, M-Q-1, 2N+2]$.

\section{Framework for Qubit Operator Encoding}
\label{appendix: def}

The basics of operator encoding-decoding is reviewed here in the language of binary linear vector space. Literature such as Ref.~\cite{shee2022qubit} and \cite{lee2021variational} employ projectors for approaching the $\lceil N\log M\rceil$ lower bound. However, both methods suffer from unscalable measurement costs. One reason behind this broken scaling is the disruption of the linear structures of the projectors. In our framework, we connect the projector to linear binary vector space, which allows us to assess the measurement complexity with the linear structure.

\subsection{Notations and Properties}
We define the notations of the $X$, $P$, $Z$-strings used in this paper.
\begin{definition}
    Projectors in the computational basis are defined as:
    \begin{equation}
    P^{0/1} = \ket{0/1}\bra{0/1}=\frac{1+(-1)^{0/1}Z}{2}
    \label{definition: proj}
    \end{equation}

\end{definition}
\begin{definition}
    Let $\Vec{b} \in \mathbb{F}_2^M$ be a binary vector, P-string is defined as follows:
    \begin{equation}
        P^{\Vec{a}}=\prod_{m=1}^{M}P_m^{\Vec{b}[m]}.
    \end{equation}
    with $\Vec{b}[m]$ denotes the $m$th entry of the vector and the following property:
    \begin{equation}
        P^{\Vec{b}}\ket{\Vec{a}} = \delta_{ab}\ket{\Vec{a}}.
        \label{eqn: P on bits}
    \end{equation}
    \label{definition: projector string}
\end{definition}
\begin{definition}\label{def: X-string}
    Let $\Vec{b} \in \mathbb{F}_2^M$ be a binary vector, X-string is defined as follows:
    \begin{equation}
        X^{\Vec{b}} = \prod^M_{m=1}X^{\Vec{b}[m]}_m.
    \end{equation}
    with $\Vec{b}[m]$ denotes the $m$th entry of the vector and the following property:
    \begin{equation}
        X^{\Vec{b}}\ket{a} = \ket{\Vec{a}\oplus \Vec{b}}.
        \label{eqn: X on bits}
    \end{equation}    
\end{definition}
\begin{definition}
Let $\Vec{b} \in \mathbb{F}_2^M$ be a binary vector, we define a Z-string as follows:
\begin{equation}
    Z^{\Vec{b}} = \prod^M_{m=1}Z^{\Vec{b}[m]}_m.
\end{equation} 
with $\Vec{b}[m]$ denotes the $m$th entry of the vector and the following property:
\begin{equation}
    Z^{\Vec{b}}\ket{\Vec{a}} = (-1)^{\Vec{b}\cdot \Vec{a}}\ket{\Vec{a}}.
    \label{eqn: Z on bits}
\end{equation}
\end{definition}
Next, we will derive the transformation of the $X$, $P$, and $Z$-strings with an encoder $\mathcal{E}: \ket{\vec{a}} \rightarrow \ket{\mathbf{G} \vec{a}}$. Immediately, we can infer from the state encoding of operators in Eqns~(\ref{eqn: P on bits}), (\ref{eqn: X on bits}), and (\ref{eqn: Z on bits}) the transformations as:
\begin{align}
    \mathcal{E}(P^{\Vec{b}}) &= P^{\mathbf{G}\Vec{b}}\label{eqn: P transform}\\
    \mathcal{E}(X^{\Vec{b}}) &= X^{\mathbf{G}\Vec{b}}\label{eqn: X transform}\\
    \mathcal{E}(Z^{\Vec{b}}) &= Z^{(\mathbf{G}^{-1})^T\Vec{b}}\label{eqn: Z transform}.
\end{align}
Linear compression preserves Eqns~(\ref{eqn: P transform}) and (\ref{eqn: X transform}), but breaks Eqn.~(\ref{eqn: Z transform}) due to the lack of a left inverse for $\mathbf{G}$. Instead, the individual Pauli-Z operator can be encoded as:
\begin{equation}
    \mathcal{E}(Z^{\Vec{c}}) = \sum_{\Vec{b}\in S}(-1)^{\Vec{c}\cdot\Vec{b}}P^{\mathbf{G}\Vec{b}}
    \label{eqn: compressed Z operator}
\end{equation}
where $P^{\mathbf{G} \vec{b}}$ is a $Q$-qubit projector, and $S = \{\vec{b} \mid H(\vec{b}) = N\}$ is the set of all number-conserved states. The exact gate implementation of Eqn.~(\ref{eqn: compressed Z operator}) can be performed with multi-controlled gates or a linear combination of an exponential number of $Z$-strings. However, both methods suffer from scalability problems, as discussed in the gate complexity section in Appendix \ref{appendix: gate complexity}. This problem can be tackled using Quantum Signal Processing (QSP) techniques, provided that the encoded $\mathcal{E}(Z^{\vec{c}})$ is also sparse.

\subsection{XP Decomposition}

We can now derive a general representation of the qubit operator. In essence, all qubit operators consist of a selected set of states, state transitions, and phase evolution. In the computational basis, $X$-strings capture transitions, $Z$-strings capture phase, and the sum over projectors captures the states.

Given an X-string $X^{\Vec{a}}$ and a projector $P^{\Vec{b}}$, we can observe the following properties:
\begin{equation}
X^{\Vec{a}}P^{\Vec{b}}=P^{\Vec{a}\oplus\Vec{b}}X^{\Vec{a}}
\end{equation}
We can represent the Hermitian or anti-Hermitian part of the operator as:
\begin{align}
X^{\Vec{a}}(P^{\Vec{b}}\pm P^{\Vec{a}\oplus\Vec{b}}) &= \ket{\Vec{a}\oplus\Vec{b}}\bra{\Vec{b}}\pm\ket{\Vec{b}}\bra{\Vec{a}\oplus\Vec{b}}\\
(X^{\Vec{a}}(P^{\Vec{b}}\pm P^{\Vec{a}\oplus\Vec{b}}))^\dagger &= \pm X^{\Vec{a}}(P^{\Vec{b}}\pm P^{\Vec{a}\oplus\Vec{b}}).
\label{eqn: hermitian antihermitian}
\end{align}
Summation over this basis operator with respect to the Boolean vector $\Vec{a}$ and $\Vec{b}$ allows us to construct arbitrary (anti-)Hermitian qubit operators. As we show in Theorem \ref{thm: quantum chemistry ham scaling}, arbitrary sum over projector $P^{\Vec{b}}$ and its conjugate 
$P^{\Vec{a}\oplus\Vec{b}}$ and $X^{\Vec{a}}$ generates commuting Pauli strings as we expand the projector into $Z$-strings. The number of $X$-strings determines the measurement cost of a given qubit operator. Suppose now we exponentiate the Hermitian operator. It also determines the Trotterization steps required to approximate the unitary evolution.

In the JW representation, the product of creation and annihilation operators always generates one $X$-string under our decomposition. Our linear encoder then surjectively encodes the JW $X$-strings. This property allows us to exactly implement fermionic gates, though at the risk of an unscalable number of gates, and efficiently prepare measurement bases for fermionic observables.

\section{Qubit Complexity Proof}
\label{appendix: proof}
\begin{lemma}\label{thm: criterion for successful encoding}
    Given bitwise linear map $\mathcal{E}$ which is number-conserved, for all bitstrings $k$ which is an element of $\ker \mathcal{E}$, $D_H(k, 0)\geq 2N+2$ if $D_H(k, 0)\equiv 0 (\text{mod } 2)$.
\end{lemma}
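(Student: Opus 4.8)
The plan is a proof by contradiction carried out purely in binary linear algebra: identify $\mathcal{E}$ with its $Q\times M$ matrix $\mathbf{G}$ over $\mathbb{F}_2$, write $H(\cdot)=D_H(\cdot,0)$ for Hamming weight, and use only $\mathbf{G}(\vec u\oplus\vec v)=\mathbf{G}\vec u\oplus\mathbf{G}\vec v$. Here "$\mathcal{E}$ is number-conserved'' is read as: $\mathcal{E}$ encodes the particle-number configurations injectively, i.e.\ it is injective on the set of bitstrings of Hamming weight at most $N$ (under the stricter convention "weight exactly $N$'' a padding step, noted below, is inserted). The strategy is to show that a short even-weight kernel vector would force two distinct weight-$\le N$ bitstrings to collide. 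The $k=\vec 0$ case is excluded as usual for a code-distance statement.

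Assume $k\in\ker\mathcal{E}$ with $k\neq\vec 0$, $H(k)=2w$ even (so $w\ge 1$), and suppose for contradiction that $2w\le 2N$, i.e.\ $w\le N$. Since $\mathrm{supp}(k)$ has exactly $2w$ elements, partition it into disjoint $T_1,T_2$ with $|T_1|=|T_2|=w$, and let $\vec b_1,\vec b_2\in\mathbb{F}_2^M$ be their indicator vectors. Then $\vec b_1\oplus\vec b_2=k$ because $T_1\sqcup T_2=\mathrm{supp}(k)$, and $\vec b_1\neq\vec b_2$ because $w\ge 1$ makes $T_1,T_2$ nonempty and disjoint. By linearity, $\mathbf{G}\vec b_1\oplus\mathbf{G}\vec b_2=\mathbf{G}k=\vec 0$, hence $\mathbf{G}\vec b_1=\mathbf{G}\vec b_2$; yet $H(\vec b_1)=H(\vec b_2)=w\le N$ while $\vec b_1\neq\vec b_2$, contradicting number-conservation. (Under the "weight exactly $N$'' convention, replace $\vec b_i$ by $\vec b_i\oplus\vec p$ with $\vec p$ any weight-$(N-w)$ indicator vector supported on the complement of $\mathrm{supp}(k)$, which exists since $M-2w\ge N-w$ in the regime $M\gg N$; this produces two distinct weight-$N$ strings with equal image.) Therefore $H(k)>2N$, and since $H(k)$ is even this yields $H(k)\ge 2N+2$, proving the lemma; equivalently, the even-weight part of $\ker\mathcal{E}$, i.e.\ the dual code $\mathbf{C}$, has minimum distance at least $2N+2$.

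The only delicate point — and the conceptual heart of the statement — is the use of the evenness hypothesis in the splitting step: a weight-$2w$ vector with $w\le N$ splits into two halves of weight exactly $w\le N$, each a legitimate low-particle-number configuration, whereas an odd-weight kernel vector would only split into unequal parts and, more to the point, relates bitstrings of opposite particle-number parity, so it cannot violate the number-conserved encoding (matching the remark in Appendix A that odd equivalence classes are unconstrained). Everything else is routine $\mathbb{F}_2$ bookkeeping, and I do not anticipate a genuine obstacle beyond fixing the precise reading of "number-conserved'' intended in the lemma.
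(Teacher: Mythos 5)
Your proof is correct and follows essentially the same route as the paper's: assume an even-weight kernel element $k$ with $H(k)=2K\le 2N$, split its support into two halves, pad each half with a common weight-$(N-K)$ set outside $\mathrm{supp}(k)$ to obtain two distinct weight-$N$ configurations differing by $k$, and conclude they collide under $\mathcal{E}$ by linearity. Your write-up is in fact more explicit than the paper's (which only asserts the existence of $\vec c,\vec d$ with $N-K$ overlapping bits), and your remarks on the role of evenness and the $k=\vec 0$ exclusion are consistent with the discussion in Appendix A.
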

\begin{proof}
This theorem claims that the even Hamming weight codewords, generated from the kernel space, must be lower bounded by $2N+2$. 

Suppose there is a bitstring $\Vec{k}$ with Hamming weight $2K< 2N+2$. We can construct two distinct bitstrings with Hamming weight $N$ such that $\Vec{d}=\Vec{c}\oplus \Vec{k}$, and $\Vec{c}$, $\Vec{d}$ have $N-K$ overlapping bits. We observe the following relation in the encoded space
\begin{equation}
    \mathcal{E}(\Vec{d})=\mathcal{E}(\Vec{c})\oplus\mathcal{E}(\Vec{k})=\mathcal{E}(\Vec{c}).
\end{equation}
\end{proof}

\begingroup
\def\thetheorem{\ref{theorem: asymptotic scaling}}
\begin{theorem}
      Theorem \ref{theorem: asymptotic scaling}.  
\end{theorem}      
\addtocounter{theorem}{-1}
\endgroup
\begin{proof}
From Lemma \ref{thm: criterion for successful encoding}, we observe that all even kernel elements $k$, generated by the dual code $\mathbf{C}^T$, have even Hamming weight lower bounded as $D_H(k, 0) \geq 2N+2$. We can interpret the row of $\mathbf{C}$ as a $M-Q$ set of $M$-bit bitstrings, whose arbitrary bitwise addition $\oplus$ among the bitstrings are encoded into the $\mathbf{0}$ vector in the encoded space. Since $\mathbf{C}$ is invariant under elementary bitwise row addition, it is possible to eliminate the odd Hamming weight $M$-bit bitstrings with odd Hamming weight $M$-bit bitstrings. $\mathbf{C}$ can be reorganized as a set of $M$-bit bitstrings with \textit{only one} element with odd Hamming weight. In this form, we can impose Theorem \ref{thm: criterion for successful encoding} on the $M-Q-1$ numbers of $M$-bit bitstrings.

Next, we reformulate the constraint as finding the maximal $M-Q-1$ of $\mathbf{C}$ given qubits $M$, and minimal hamming distance $2N+2$, and all Hamming weight of elements in $\mathbf{C}$ are even. Borrowing from CEC, the maximum is upper bounded by the Hamming bound \cite{Intro_err} and lower bounded by the Gilbert–Varshamov bound \cite{Gilbert_1952, Varshamov_1957}. 
\begin{align}
    \frac{2^{M-1}}{\sum_{j=0}^{N}\binom{M}{2j}}&\leq 2^{M-Q-1}< \frac{2^{M-1}}{\sum_{j=0}^{\lfloor \frac{N}{2}\rfloor}\binom{M}{2j}}\label{eqn: even subspace bound}\\
    M-1-\log\sum_{j=0}^{N}\binom{M}{2j}&\leq M-Q-1 \nonumber\\
    &< M-1-\log\sum_{j=0}^{\lfloor N/2\rfloor}\binom{M}{2j} \nonumber\\
    2N\log M&\geq Q> 2\left\lfloor \frac{N}{2}\right\rfloor\log M.
\end{align}
We modify the Hamming and Gilbert-Varshamov bound in Eqn.~(\ref{eqn: even subspace bound}) such that we are only considering the bound on the even Hamming weight subspace, which has size $2^{M-1}$; and the sum $\sum_{j=0}^{N}\binom{M}{2j}$ is only over the even binomial coefficients. However, the argument of the Hamming cube as the maximal non-overlapping covering remained unchanged \cite{Gilbert_1952}. We also note that the Hamming bound for even subspace becomes a strict non-equality because the covering space does not cover the whole even code subspace. 

We have proved that optimal bitwise linear encoding has $\mathcal{O}(N\log M)$ scaling. In particular, the lower bound coincides with the scaling of physical states. 
\end{proof} 

\section{Proof for Measurement Complexity}
\label{appendix: measurement cost}

\begin{lemma}
    All fermionic operators that can be written as a product of creation and annihilation operators can be expressed in terms of the product of one X-string and sum over P-strings under linear compression.
    \label{lemma: XP representation}
\end{lemma}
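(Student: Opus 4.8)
The plan is to prove Lemma \ref{lemma: XP representation} by directly unfolding any product of creation and annihilation operators in the occupation-number (computational) basis and showing that the action is captured by a single bit-flip pattern together with a projector onto the support. First I would recall the elementary Majorana/occupation-basis identities: $\hat a_i$ maps $\ket{\dots 1_i \dots}\mapsto \pm\ket{\dots 0_i \dots}$ and annihilates $\ket{\dots 0_i \dots}$, while $\hat a_i^\dagger$ does the reverse, with a sign governed by the parity of occupied modes preceding $i$ under the chosen (e.g. Jordan--Wigner) ordering. In terms of the $X$-, $P$-, $Z$-strings defined in Appendix \ref{appendix: def}, this reads $\hat a_i = X^{\vec e_i} P^{\vec e_i} Z^{\vec p_{<i}}$ and $\hat a_i^\dagger = X^{\vec e_i} P^{\vec 0 \text{ on } i}\, Z^{\vec p_{<i}}$, i.e. each single fermionic operator is already a product of one $X$-string, one single-site projector, and one $Z$-string (a phase).

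Next I would compose a general product $\hat O = \prod_{t} \hat a^{(\sigma_t)}_{i_t}$ and push all $X$-strings to the left using the commutation rule $X^{\vec a} P^{\vec b} = P^{\vec a \oplus \vec b} X^{\vec a}$ (Eqn after Definition \ref{def: X-string}) and the fact that $X$- and $Z$-strings commute up to a sign that only produces an overall $\pm 1$ (since $X^{\vec a} Z^{\vec c} = (-1)^{\vec a\cdot\vec c} Z^{\vec c} X^{\vec a}$). Because the $X$-strings here are all single-site generators $X^{\vec e_{i_t}}$, their total product is $X^{\vec a}$ with $\vec a = \bigoplus_t \vec e_{i_t}$ — a single $X$-string. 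Collecting the projectors that have been swept to the right of all the $X$-strings yields a product of single-site projectors, which is itself a single $P$-string $P^{\vec b}$ for some $\vec b$ (or zero, if the occupation constraints are inconsistent, e.g. applying $\hat a_i$ twice). The residual $Z$-strings, acting after the $P$-string has fixed the occupation pattern, collapse to a scalar phase via $Z^{\vec c} P^{\vec b} = (-1)^{\vec c\cdot\vec b} P^{\vec b}$ (the identity used in Eqn~(\ref{eqn: unencoded re/im xp pair})). Hence $\hat O = (\text{phase})\, X^{\vec a} P^{\vec b}$, and after linear compression $\mathcal{E}(\hat O) = (\text{phase})\, X^{\mathbf G\vec a} P^{\mathbf G \vec b}$ by Eqns~(\ref{eqn: P transform})–(\ref{eqn: X transform}); writing $\hat O$ as a real linear combination of terms and Hermitizing gives the $X^{\vec a}\sum_{\vec b} P^{\vec b}$ form of Eqn~(\ref{eqn: unencoded re/im xp pair}). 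For the excitation operators $\hat a_i^\dagger \hat a_j$ and $\hat a_i^\dagger \hat a_j^\dagger \hat a_k \hat a_l$ appearing in $\hat H$, one sums over the admissible occupation bitstrings $\vec b \in S_{\hat O}$, producing exactly the sum-over-$P$-strings structure claimed.

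The main obstacle, and the place I would spend the most care, is the bookkeeping of the Jordan--Wigner sign strings: each $\hat a_i$ carries a $Z$-string $Z^{\vec p_{<i}}$ supported on modes below $i$, and when several such operators are multiplied and the $X$-strings are commuted past them, these $Z$-strings interact with the $X$-strings and with each other. I would need to verify that, after all $X$-strings are moved left, every surviving $Z$-string sits to the right of the final $P$-string (so it reduces to a $\pm1$ phase on the fixed occupation pattern) rather than between two $X$-strings where it would obstruct the single-$X$-string conclusion. The clean way to handle this is to work in the Majorana picture or to note that $X^{\vec a} Z^{\vec c} = (-1)^{\vec a\cdot\vec c} Z^{\vec c} X^{\vec a}$ means reordering only ever spawns global signs, never new $X$- or $P$-content; thus the count of $X$-strings is invariant and equals one whenever the net bit-flip pattern $\vec a = \bigoplus_t \vec e_{i_t}$ is the only flip acting, which is automatic for a product of raising/lowering operators. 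A secondary, minor point to check is consistency (non-vanishing): the product is zero unless the intermediate occupation constraints can be simultaneously satisfied, which simply restricts the set $S_{\hat O}$ over which $\vec b$ ranges and does not affect the structural claim.
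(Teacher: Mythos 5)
Your proof follows essentially the same route as the paper's: write each Jordan--Wigner creation/annihilation operator as a $Z$-string times a single-site $X$ and projector, merge the single-site $X$'s into one $X$-string $X^{\vec a}$ with $\vec a=\bigoplus_t \vec e_{i_t}$, and absorb the $Z$-strings as signs via $Z^{\vec c}P^{\vec b}=(-1)^{\vec c\cdot\vec b}P^{\vec b}$ before applying $\mathcal{E}$. One caveat on your bookkeeping: a JW $Z$-string does \emph{not} collapse to a single global phase against a partial projector supported only on the acted-upon modes --- you must first resolve the identity on the remaining modes into full $M$-bit projector strings (the paper's $P^{0/1}_i\otimes\sum_{\vec a}P^{\vec a}$ step), after which the sign $(-1)^{\vec c\cdot\vec b}$ is $\vec b$-dependent and sits inside the sum as in Eqn~(\ref{eqn: JW operator}); your closing sum over $\vec b\in S_{\hat O}$ supplies exactly this, but the intermediate assertion ``$\hat O=(\text{phase})\,X^{\vec a}P^{\vec b}$'' with a scalar phase is only valid after that expansion.
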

\begin{proof}
Let us consider the Jordan-Wigner basis. The one particle creation and annihilation operators are written as the product of the Pauli-Z operator $Z_m$, Pauli-X operator $X_i$, and projector $P^{0/1}_i$ defined in Definition \ref{definition: proj}:
\begin{align}
    \hat{a}^\dagger_i=\prod_{m=1}^{i-1}Z_m X_iP^0_i\\
    \hat{a}_j=\prod_{m=1}^{j-1}Z_m X_jP^1_j.
\end{align}
Using the completeness relation of the identity operator, it is possible to expand $P^{0/1}_i$ as a sum over projector strings $P^{a_1a_2...a_{M-1}}$ defined in Def. \ref{definition: projector string}:
\begin{equation}
    P^{0/1}_i = P^{0/1}_i\otimes \sum_{\Vec{a}\in \{0,1\}^{M-1}}P^{\Vec{a}}.
\end{equation}
Thus, the Z-strings, $\prod_{m=1}^{i-1}Z_m$ from the above example, can be absorbed into the projectors via the relationship:
\begin{equation}
    Z^{\Vec{c}}P^{\Vec{b}} = (-1)^{\Vec{c}\cdot\Vec{b}}P^{\Vec{b}}.
\end{equation}
After this procedure, fermionic observables can be represented with only the X-strings, projectors, and signs. Each projector represents a fermionic state, and the unphysical states are discarded such that the encoded projectors would not coincide with the number-conserved projectors due to the equivalent classes. Let us denote $\hat{O}$ as the product of creation and annihilation operators. Given a linear encoder $\mathcal{E}(\cdot)$ we can succinctly express the unencoded and encoded gates as follows:
\begin{align}
    \hat{O} &= X^{\Vec{a}}\sum_{\Vec{b}\in S_{\hat{O}}}(-1)^{\Vec{c}\cdot\Vec{b}}P^{\Vec{b}}\label{eqn: JW operator}\\
    \mathcal{E}(\hat{O}) &= X^{\mathbf{G}\Vec{a}}\sum_{\Vec{b}\in S_{\hat{O}}}(-1)^{\Vec{c}\cdot\Vec{b}}P^{\mathbf{G}\Vec{b}}\label{eqn: encoded operator}
\end{align}
where $S_{\hat{O}}$ represents the set of number-conserved states on which $\hat{O}$ acts, and $\Vec{a},\Vec{c}\in \mathbb{F}^M_2$ are binary vector uniquely associated with each operator $\hat{O}$, and $\Vec{a}$ induces physical state transition and $\Vec{c}$ encodes the sign information for each vector $\Vec{b}$.
\end{proof}

\begin{lemma}\label{lemma: one measurement basis}
    The real and imaginary part of an observable $\Re/\Im{\hat{O}}$ of the form in Eqn.~(\ref{eqn: encoded operator}) can be measured with one measurement basis under linear encoding.
\end{lemma}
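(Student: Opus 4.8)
The plan is to expand the encoded observable into Pauli strings, show they pairwise commute, and then invoke the standard fact that a commuting family of Paulis is simultaneously diagonalized by a single Clifford unitary. First I would rewrite each $Q$-qubit projector in Eqn~(\ref{eqn: re/im xp pair}) in its Pauli-$Z$ Fourier form, $P^{\vec d}=2^{-Q}\sum_{\vec s\in\mathbb{F}_2^Q}(-1)^{\vec d\cdot\vec s}Z^{\vec s}$, which follows factor-by-factor from Definition~\ref{definition: proj}. Writing $\vec\alpha=\mathbf{G}\vec a$ and $\vec\beta=\mathbf{G}\vec b$ and using $\mathbf{G}(\vec a\oplus\vec b)=\vec\alpha\oplus\vec\beta$ (linearity of $\mathbf{G}$ over $\mathbb{F}_2$), the projector pair collapses to $P^{\vec\beta}\pm P^{\vec\alpha\oplus\vec\beta}=2^{-Q}\sum_{\vec s}(-1)^{\vec\beta\cdot\vec s}\bigl(1\pm(-1)^{\vec\alpha\cdot\vec s}\bigr)Z^{\vec s}$. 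The coefficient $1\pm(-1)^{\vec\alpha\cdot\vec s}$ vanishes unless $\vec\alpha\cdot\vec s\equiv 0$ (the $+$, Hermitian/real case) or $\vec\alpha\cdot\vec s\equiv 1$ (the $-$, anti-Hermitian/imaginary case, which we multiply by $i$ to make Hermitian for measurement). Hence, after multiplying by $X^{\vec\alpha}$ and summing over $\vec b\in S_{\hat{O}}$, the whole observable is a real linear combination of Pauli strings all of the form $X^{\vec\alpha}Z^{\vec s}$ with $\vec\alpha\cdot\vec s$ pinned to a single value $\epsilon\in\{0,1\}$, with the $\vec s$ ranging over one affine coset.

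The second step is the commutation check. Two Pauli strings $X^{\vec\alpha}Z^{\vec s}$ and $X^{\vec\alpha}Z^{\vec s'}$ commute iff their $\mathbb{F}_2$ symplectic inner product vanishes, i.e. iff $\vec\alpha\cdot\vec s'\oplus\vec\alpha\cdot\vec s\equiv 0$; since both summands equal $\epsilon$, the result is $0$ for either value of $\epsilon$. So every pair of Pauli strings appearing in the expansion of the real (respectively imaginary) part commutes, and the observable lies in the real span of a mutually commuting set of Paulis. Finally, I would invoke the standard result that such a set can be simultaneously diagonalized by a Clifford unitary $\mathbf{A}$ — equivalently mapped into the all-$Z$ subgroup — so that measuring in the basis $\mathbf{A}^\dagger\ket{z}$ reads out all their eigenvalues at once; this is exactly the Clifford transformation used in Algorithm~\ref{alg2}, and it is the single measurement basis claimed. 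The degenerate case $\vec a=\vec 0$ would be noted separately: the $X$-string is trivial, the imaginary part vanishes identically, and the operator is already diagonal in the computational basis.

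The main obstacle — or rather, the one point requiring genuine care — is phase bookkeeping when turning each product $X^{\vec\alpha}Z^{\vec s}$ into an honest Hermitian Pauli operator (every qubit where the $X$ and $Z$ supports overlap contributes a factor $-i$, i.e. becomes a $Y$), together with the global factor $i$ on the anti-Hermitian part; one must confirm these phases combine so the coefficients stay real and, crucially, that they never enter the commutation test, which depends only on the symplectic form and not on phases. A secondary point to spell out is that restricting the projector sum to $\vec b\in S_{\hat{O}}$ — and, after encoding, discarding the equivalence-class collisions with unphysical strings — does not enlarge the set of $X$-strings beyond the single $X^{\vec\alpha}$, so the commuting structure is genuinely global across the whole operator rather than internal to one $\vec b$ term; this is where Lemma~\ref{lemma: XP representation} is used.
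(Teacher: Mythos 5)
Your proof is correct, but it takes a genuinely different route from the paper's. You expand each encoded projector into its Pauli-$Z$ Fourier form, observe that the pair $P^{\mathbf{G}\vec b}\pm P^{\mathbf{G}(\vec a\oplus\vec b)}$ kills every $Z^{\vec s}$ except those with $(\mathbf{G}\vec a)\cdot\vec s$ pinned to a single parity $\epsilon$, and then verify pairwise commutation of the resulting strings $X^{\mathbf{G}\vec a}Z^{\vec s}$ via the symplectic inner product, which is $\epsilon\oplus\epsilon=0$; simultaneous diagonalizability by some Clifford then follows from the standard stabilizer-formalism fact. The paper instead argues constructively: it exhibits an explicit CNOT circuit $\hat{C}$ (Eqns~(\ref{eqn: CNOT X})--(\ref{eqn: CNOT 1})) that maps $X^{\mathbf{G}\vec a}$ to a single-qubit $X_B$ and collapses each conjugate projector pair to $P^{0}_BP^{\vec m}\pm P^{1}_BP^{\vec m}$, so that the transformed operator (Eqn~(\ref{eqn: transformed operator})) is manifestly a single $X_B$ or $-iY_B$ tensored with diagonal projectors, hence measurable in one basis; commutation in the original frame follows because Clifford conjugation preserves it. Your argument is shorter and more general — it works for any fixed $X$-string without reference to the structure of $\mathbf{G}$, and it cleanly isolates why the real and imaginary parts each live in a single commuting family (the affine coset condition $\vec\alpha\cdot\vec s=\epsilon$). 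What it does not deliver is the explicit basis-change circuit: the paper's $\hat{C}$ is precisely the transformation $\mathbf{A}$ that the FED post-processing in Algorithm~\ref{alg2} applies to the measured bitstrings, so the constructive proof doubles as the specification of the decoder, whereas your appeal to ``some Clifford exists'' would still need to be instantiated for the protocol to run. Your two flagged caveats — the $(-i)^{|\vec\alpha\wedge\vec s|}$ phase bookkeeping (which is benign because $\vec\alpha\cdot\vec s$ fixes its parity, keeping coefficients real after the global $i$ on the anti-Hermitian part) and the fact that restricting to $\vec b\in S_{\hat{O}}$ never introduces a second $X$-string (Lemma~\ref{lemma: XP representation}) — are exactly the right points to pin down, and both check out.
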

\begin{proof}
    From Eqn.~(\ref{eqn: hermitian antihermitian}) and Eqn.~(\ref{eqn: encoded operator}), we can write the (anti-)Hermitian part of the operators as Eqn.~(\ref{eqn: re/im xp pair}):
    \begin{equation}
        \Re/\Im{\mathcal{E}(\hat{O})} = X^{\mathbf{G}\Vec{a}}\sum_{\Vec{b}\in S_{\hat{O}}}(-1)^{\Vec{c}\cdot\Vec{b}}(P^{\mathbf{G}\Vec{b}}\pm P^{\mathbf{G}(\Vec{a}\oplus\Vec{b})}).
        \label{eqn: appendix re/im xp pair}
    \end{equation}
    We make use of the following CNOT transformation to eliminate the extra Pauli-X operations:
    \begin{align}
    CNOT_{1, 2}X_1X_2CNOT_{1, 2}&=X_1\label{eqn: X and CNOT}\\
    CNOT_{1, 2}P^{a_1a_2}CNOT_{1, 2}&=P^{a_1(a_1\oplus a_2)}\label{eqn: P and CNOT}
    \end{align}
    where we define $i$ as control qubit and $j$ as target qubit for $CNOT_{i,j}$. Applying the Eqn.~(\ref{eqn: P and CNOT}) on a pair of conjugate projectors in the form of Eqn.~(\ref{eqn: hermitian antihermitian}), we observe that the projector at the target qubit will have the same parity. The full set of CNOT gates $\hat{C}$ results in the following transformation:
    \begin{align}
        \hat{C}X^{\mathbf{G}\Vec{a}}\hat{C}^\dagger &= X_B\label{eqn: CNOT X}\\
        \hat{C}P^{\mathbf{G}\Vec{b}}\hat{C}^\dagger &= P^{\mathbf{MG}\Vec{b}} = P^0_BP^{\Vec{m}}\label{eqn: CNOT 0}\\
        \hat{C}P^{\mathbf{G}(\Vec{a}\oplus\Vec{b})}\hat{C}^\dagger &= P^{\mathbf{MG}(\Vec{a}\oplus\Vec{b})}= P^1_BP^{\Vec{m}}\label{eqn: CNOT 1}
    \end{align}
    where the adjoint action of $\hat{C}$ on $P$ is equal to the $Q\times Q$ matrix $\mathbf{M}$ action on $\mathbf{G}\Vec{b}$, such that $\mathbf{MG}(n\Vec{a})\oplus\Vec{b} = ((n \mod 2)_B, \Vec{m}(\Vec{b}))$, with $B$ denoting the bit location on which $X_B$ acts and $n$ an integer. We write $\Vec{m} = \Vec{m}(\Vec{b})$ as a function of $\Vec{b}$ to simplify the notation in the sum. The operator becomes:
    \begin{align}
        \hat{C}\Re/\Im{\mathcal{E}(\hat{O})}\hat{C}^\dagger
        &= \sum_{\Vec{b}\in S_{\hat{O}}}(-1)^{\Vec{c}\cdot\Vec{b}}X_B(P_B^{0}P^{\Vec{m}(\Vec{b})}\pm P_B^{1}P^{\Vec{m}(\Vec{b})})\nonumber\\
        \label{eqn: transformed operator}
    \end{align}
    On qubit $B$, the sum of $P^0$ and $P^1$ generates local $I$ or difference generates $Z_N$. We can rewrite them as $X_N$ or $-iY_N$, as shown in Eqn.~(\ref{eqn: Gate decomposition}).

    Since all operators, except the qubit $Q$ which is either $X_i$ when hermitian or $iY_Q$ otherwise, are projectors made of $I, Z$ operators, the real/imaginary part of the bitwise linear encoded operators must generate commuting Pauli string in the Clifford transformed basis. Finally, since Clifford transforms one-to-one encoded Pauli strings, the operator $\Re/\Im{\mathcal{E}(\hat{O})}$ also generates commuting Pauli strings.
\end{proof}

\begingroup
\def\thetheorem{\ref{thm: quantum chemistry ham scaling}}
\begin{theorem}
    The number of Clifford bases for measuring the expectation value of a quantum chemistry Hamiltonian is upper bounded by $\mathcal{O}(M^4)$.
\end{theorem}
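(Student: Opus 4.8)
\textit{Proof proposal.}\textemdash The plan is to reduce the measurement of $\hat{H}$ to counting the distinct $X$-strings it generates. First I would invoke Lemma~\ref{lemma: XP representation}: each product of creation and annihilation operators appearing in Eqn~(\ref{eqn: hamiltonian}), once linearly encoded, takes the form $X^{\mathbf{G}\Vec{a}}\sum_{\Vec{b}\in S_{\hat{O}}}\gamma_{\Vec{b}}P^{\mathbf{G}\Vec{b}}$ for a single transition vector $\Vec{a}\in\mathbb{F}_2^M$, where $\Vec{a}=\bigoplus_{m\in T}\Vec{e}_m$ and $T$ is the set of mode indices occurring an odd number of times in the monomial. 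Next I would partition all Hamiltonian terms into groups sharing a common $\Vec{a}$; since the adjoint of a monomial leaves $T$ unchanged, each group is Hermitian, and after combining coefficients it is still of the form in Eqn~(\ref{eqn: encoded operator}). By Lemma~\ref{lemma: one measurement basis} the Hermitian expectation value of each such group is read out from a single Clifford basis: the diagonalizing CNOT/Clifford circuit there depends only on $\mathbf{G}\Vec{a}$, not on the individual $\Vec{b}$ or sign vectors, so it simultaneously diagonalizes the whole group.

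The remaining step is a combinatorial count of the distinct $\Vec{a}$ produced by Eqn~(\ref{eqn: hamiltonian}). For the one-body terms $\hat{a}^\dagger_i\hat{a}_j$, $i=j$ yields $\Vec{a}=\Vec{0}$ while $i\neq j$ yields a weight-two $\Vec{a}$ supported on the unordered pair $\{i,j\}$, contributing at most $\binom{M}{2}$ vectors. For the two-body terms $\hat{a}^\dagger_i\hat{a}^\dagger_j\hat{a}_k\hat{a}_l$, a short case analysis on coincidences among $i,j,k,l$ shows $\Vec{a}$ has weight $0$ (when $\{i,j\}=\{k,l\}$, folding into the $\Vec{0}$ group), weight $2$ (one coincidence, supported on a pair already counted), or weight $4$ (all indices distinct, supported on a $4$-subset), the last contributing at most $\binom{M}{4}$ vectors. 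Summing the three cases gives at most $1+\binom{M}{2}+\binom{M}{4}$ distinct $X$-strings, hence at most that many Clifford bases, which is $\mathcal{O}(M^4)$.

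I expect the main obstacle to be careful bookkeeping rather than a conceptual hurdle: I must ensure that every two-body monomial — including those with repeated indices and those whose normal ordering spawns lower-body contributions — still falls into one of the three weight classes and generates no $X$-string outside the $1+\binom{M}{2}+\binom{M}{4}$ list. I also need to confirm the subtle point that merging terms with a common $\Vec{a}$ but different sign vectors $\Vec{c}$ does not spoil the single-basis property; this holds because Lemma~\ref{lemma: one measurement basis} uses $\Vec{c}$ only through the scalar phases $(-1)^{\Vec{c}\cdot\Vec{b}}$ that sit inside the sum over $\Vec{b}$ and are untouched by the diagonalizing Clifford, with the $\Vec{a}=\Vec{0}$ group being already diagonal in the computational basis. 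Finally it is worth stating explicitly that one does not pay separately for real and imaginary parts, since each $\Vec{a}$-group of the Hermitian $\hat{H}$ is itself Hermitian.
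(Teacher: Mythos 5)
Your proposal is correct and follows essentially the same route as the paper: both reduce the problem to counting the distinct $X$-strings (equivalently, the transition vectors $\Vec{a}$) generated by the one- and two-body terms, classify them by Hamming weight $0$, $2$, or $4$, and arrive at the bound $1+\binom{M}{2}+\binom{M}{4}=\mathcal{O}(M^4)$ via Lemmas~\ref{lemma: XP representation} and~\ref{lemma: one measurement basis}. Your explicit handling of the grouping by common $\Vec{a}$ and of the sign vectors $\Vec{c}$ is a slightly more careful write-up of the same argument.
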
      
\addtocounter{theorem}{-1}
\endgroup
\begin{proof}
    We only need to determine the measurement complexity of 2-RDM. 

    Next, from Lemma \ref{lemma: XP representation} and \ref{lemma: one measurement basis}, we see that the (anti-)Hermitian part of the 2-RDM, defined as $\bra{\psi}\hat{a}_i^\dagger\hat{a}_j^\dagger\hat{a}_k\hat{a}_l\pm\hat{a}_l^\dagger\hat{a}_k^\dagger\hat{a}_j\hat{a}_i\ket{\psi}$, can be measured in one basis for each set of fermionic modes $i, j, k, l$. In quantum chemistry, we are only interested in the real part of the observables. Therefore, each $i,j,k,l$ demands one measurement basis.

    Due to the result in Lemma \ref{lemma: one measurement basis}, each distinct X-string corresponds to one unique measurement basis. Upon encoding the 2-RDM operator via the JW transformation, we see that the encoded operator is of the form of the real part of Eqn.~(\ref{eqn: JW operator}). The X-string in this JW representation is invariant under any permutation among $i, j, k, l$ in the JW representation. Thus, the measurement scaling is equal to the possible combinations of $i, j, k, l$, where we allow the indices to be equal to each other. Next, we count the number of distinct X-strings derived from the possible permutation. In the JW representation, when all indices are not equal to each other, the number of distinct X-strings corresponds to a complete set of $M$ dimensional binary vectors with Hamming weight $4$. Meanwhile, when one pair of indices share the same digit, it corresponds to the set of $M$ dimensional binary vectors with Hamming weight $2$. When two pairs are equal, then it corresponds to $M$ dimensional binary vectors with Hamming weight $0$. Under our formalism, each X-string can be associated with a binary vector. Thus, the total measurement cost, without parallelization \cite{huggins_efficient_2021}, is equal to $\binom{M}{0}+\binom{M}{2}+\binom{M}{4}$. Parallelization is the simultaneous measurement of two or more sets of RDM elements with non-overlapping indices, maximal linear encoding breaks parallelization. 

     Finally, since a linear compression surjectively encodes the X-string guaranteed by the $2N+2$ code distance of the CEC $\mathbf{C}$, $\binom{M}{0}+\binom{M}{2}+\binom{M}{4}$ is the upper bound of the measurement scaling for the encoded space. 
\end{proof}

\begin{corollary}
The scaling of measurement basis can be completely characterized by the number of distinct X-strings that appears in the X and P decomposition.
\label{corollary: distinct X characterisation}
\end{corollary}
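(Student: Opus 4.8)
The plan is to read the corollary off as a direct consequence of Lemmas \ref{lemma: XP representation} and \ref{lemma: one measurement basis}, the key observation being that the measurement basis constructed in the proof of Lemma \ref{lemma: one measurement basis} depends only on the X-string and not on the accompanying sum of P-strings. First I would recall from Lemma \ref{lemma: XP representation} that any term appearing in the Hamiltonian --- a product of creation and annihilation operators, or a Hermitian combination thereof --- has the form $X^{\mathbf{G}\vec{a}}\sum_{\vec{b}\in S_{\hat{O}}}(-1)^{\vec{c}\cdot\vec{b}}(P^{\mathbf{G}\vec{b}}\pm P^{\mathbf{G}(\vec{a}\oplus\vec{b})})$, so the data of the operator splits cleanly into one encoded X-string $\mathbf{G}\vec{a}$ and a diagonal remainder built from P-strings. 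Then I would note that the Clifford circuit $\hat{C}$ used in the proof of Lemma \ref{lemma: one measurement basis} --- a CNOT network together with a single-qubit rotation on the distinguished qubit $B$ --- is determined entirely by the support of $\mathbf{G}\vec{a}$; the P-strings ride along passively, being mapped to products of $I$ and $Z$ that commute with one another and with the transformed $X_B$ (or $iY_B$). Hence any two operators sharing the same encoded X-string are simultaneously diagonalized by the same $\hat{C}$ and can be measured in a common basis, so the number of bases is at most the number of distinct X-strings.

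Next I would establish the matching lower bound so that the count is tight rather than merely an upper bound. Distinct X-strings $\mathbf{G}\vec{a}\neq\mathbf{G}\vec{a}'$ carry genuinely different off-diagonal structure: $\hat{C}X^{\mathbf{G}\vec{a}}\hat{C}^\dagger$ and the analogue for $\vec{a}'$ cannot in general be made simultaneously diagonal, so each distinct X-string forces at least one basis, and the assignment basis $\leftrightarrow$ X-string is a bijection. Finally, because a valid linear encoder surjectively maps the Jordan--Wigner X-strings onto the encoded X-strings --- precisely the property guaranteed by the $2N+2$ code distance of $\mathbf{C}$ that is already invoked in Theorem \ref{thm: quantum chemistry ham scaling} --- counting distinct X-strings in the encoded decomposition is equivalent to counting them in the unencoded JW decomposition, so the characterization is robust to whether one works before or after encoding.

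The one genuine subtlety I would be careful about is the word \emph{completely} in the statement: one must be precise that ``characterized by the number of distinct X-strings'' refers to the canonical measurement scheme of Lemma \ref{lemma: one measurement basis}, in which each X-string is allotted its own Clifford basis. A sharper grouping would require merging bases attached to \emph{different} X-strings, which the argument above rules out for generic coefficient patterns, and which --- as remarked in the proof of Theorem \ref{thm: quantum chemistry ham scaling} --- is exactly the parallelization across non-overlapping RDM index sets that maximal linear encoding destroys. With that caveat in place, the corollary follows from Lemmas \ref{lemma: XP representation} and \ref{lemma: one measurement basis} with no further computation; I do not expect any hard estimate here, only a clean bookkeeping statement.
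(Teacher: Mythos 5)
Your proposal matches the paper's proof in both substance and structure: the upper bound comes from Lemma \ref{lemma: one measurement basis} (one Clifford basis per X-string, with the P-strings riding along as diagonal operators), and the lower bound from the observation that Pauli strings arising from two distinct X-strings contain mutually non-commuting terms, so distinct X-strings cannot share a basis. Your additional remarks on surjectivity of the encoded X-strings and on the loss of parallelization are consistent with what the paper already notes in the proof of Theorem \ref{thm: quantum chemistry ham scaling}, so nothing further is needed.
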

\begin{proof}
From the previous proof, each unique X-string, obtained from the X P decomposition of an operator, generates commuting Pauli strings. Meanwhile, Pauli strings generated by two distinct X-strings, and two pairs of respective projectors, contain mutually non-commuting terms.     
\end{proof}
This result can also be employed to characterize non-linear encoding, which \textit{does not} surjectively encode $X$ strings to $X$ strings. Instead, Eqn.~(\ref{eqn: appendix re/im xp pair}) will further split into a sum over $X$ strings. Scalable measurement is still possible if the distinct $X$ strings for each single/double excitation operators scales polynomially. 

\section{Fermionic Gate Complexity}
\label{appendix: gate complexity}

\subsection{Gate Decompositions without Decoder}

We examine fermionic gate complexity without quantum decoder. 


\begin{proposition}
    The fermionic gate of single and double excitations can be decomposed as a series of multi-controlled gates and Clifford gates.
\end{proposition}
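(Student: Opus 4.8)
The plan is to build the decomposition directly from the $XP$ representation established in Lemma~\ref{lemma: XP representation} and Lemma~\ref{lemma: one measurement basis}. Recall that the (anti-)Hermitian part of a single or double excitation operator has the form $\Re/\Im{\mathcal{E}(\hat{O})} = X^{\mathbf{G}\Vec{a}}\sum_{\Vec{b}\in S_{\hat{O}}}(-1)^{\Vec{c}\cdot\Vec{b}}(P^{\mathbf{G}\Vec{b}}\pm P^{\mathbf{G}(\Vec{a}\oplus\Vec{b})})$. First I would observe that exponentiating such a generator, $\exp(i\theta\, \Re/\Im{\mathcal{E}(\hat{O})})$, reduces to a rotation inside the two-dimensional subspaces spanned by the pairs $\{\ket{\mathbf{G}\Vec{b}}, \ket{\mathbf{G}(\Vec{a}\oplus\Vec{b})}\}$ with the accumulated sign $(-1)^{\Vec{c}\cdot\Vec{b}}$. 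Using the Clifford (CNOT) conjugation of Eqns~(\ref{eqn: CNOT X})--(\ref{eqn: CNOT 1}), I would move to the basis in which $X^{\mathbf{G}\Vec{a}}$ acts as a single $X_B$ on a designated qubit $B$, and each projector pair becomes $P^0_B P^{\Vec{m}(\Vec{b})} \pm P^1_B P^{\Vec{m}(\Vec{b})}$. In this frame the operator is a controlled single-qubit rotation on $B$, where the control is the logical OR over all admissible $\Vec{m}(\Vec{b})$ patterns, i.e.\ a sum of multi-qubit projectors on the remaining $Q-1$ qubits.

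Next I would argue that each term $P^{\Vec{m}(\Vec{b})}$, being a projector onto a computational basis string of the other $Q-1$ qubits, is exactly the control condition of a multi-controlled single-qubit gate: conjugating by $X$ on the zero-valued positions of $\Vec{m}(\Vec{b})$ turns it into an all-ones control, and the rotation $e^{i\theta X_B}$ or $e^{\theta X_B}$-type factor (the $-iY_B$ case for the imaginary part) is applied on the target. Summing over $\Vec{b}\in S_{\hat{O}}$ then gives a product (they commute, since they act on orthogonal control sectors of $B$) of at most $|S_{\hat{O}}|$ multi-controlled rotations interleaved with Clifford layers. Collecting the CNOT dressing from the first step, the $X$-conjugations implementing the control flips, and the basis-change Cliffords, the whole gate is a finite composition of Clifford gates and multi-controlled single-qubit gates, which is the claim. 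I would also note in passing that the sign $(-1)^{\Vec{c}\cdot\Vec{b}}$ is absorbed into the phase of each controlled rotation (or realised by a Clifford $Z$-string before decomposition), so it does not affect the structure.

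The step I expect to be the main obstacle is handling the sum over $\Vec{b}\in S_{\hat{O}}$ cleanly: the projectors $P^{\mathbf{G}\Vec{b}}$ for different $\Vec{b}$ need not be mutually orthogonal as $Q$-qubit projectors once $\mathbf{G}$ is a compressing (non-square) map, so one must be careful that the ``controlled rotation'' picture is literally a direct sum and that the various multi-controlled gates genuinely commute and compose without cross terms. The resolution is that after the CNOT conjugation the relevant structure lives on the single flip qubit $B$ versus its complement, and the pairs $(\Vec{b}, \Vec{a}\oplus\Vec{b})$ partition the support of $\hat{O}$ into disjoint two-dimensional blocks by construction of $S_{\hat{O}}$ (each physical transition is counted once); so on the complement qubits the patterns $\Vec{m}(\Vec{b})$ index disjoint blocks and the corresponding multi-controlled rotations act on orthogonal subspaces and hence commute. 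Once this bookkeeping is in place, the decomposition into Clifford plus multi-controlled gates follows, and I would close by remarking (as the surrounding text does) that the number of such multi-controlled gates can scale as $|S_{\hat{O}}| = \mathcal{O}(M^N)$ in the worst case, which is why exact gate implementation, though possible, is not scalable.
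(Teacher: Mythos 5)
Your proposal follows essentially the same route as the paper's proof: apply the Clifford (CNOT) conjugation from Lemma~\ref{lemma: one measurement basis} to reduce the operator to $\sum_{\Vec{b}} X_B/-iY_B\, P^{\Vec{m}(\Vec{b})}$, note the terms mutually commute, and exponentiate each into a multi-controlled $X/Y$ rotation with the projector as the control condition. Your extra care about the orthogonality of the $P^{\mathbf{G}\Vec{b}}$ under the compressing map (guaranteed by the code distance, so distinct number-conserved $\Vec{b}$ map to distinct strings) is a point the paper asserts without comment, but it does not change the argument.
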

\begin{proof}
    From Lemma \ref{lemma: one measurement basis}, we can transform operators in Eqn.~(\ref{eqn: appendix re/im xp pair}) into X-string with the Clifford transformation $\hat{C}$ and operator of the form:
    \begin{equation}
        \hat{C}\Re/\Im{\mathcal{E}(\hat{O})}\hat{C}^\dagger = \sum_{\Vec{m}\in S_{\hat{O}}}X_B/-iY_BP^{\Vec{m}(\Vec{b})}
        \label{eqn: Gate decomposition}
    \end{equation}
    where $X_B / -iY_B$ is the last Pauli-$X/Y$ operator at qubit $B$. For the implementation of fermionic gates, it is necessary to describe the exponentiation of this operator representation. $\hat{C}$, described in Eqn.~(\ref{eqn: CNOT X}), (\ref{eqn: CNOT 0}), and (\ref{eqn: CNOT 1}), survives under exponentiation. Meanwhile, each operator $X_N / -iY_N P^{\Vec{m}(\Vec{b})}$ is mutually commutative, and its exponentiation corresponds to a multi-controlled $X/Y$ rotation, with the projectors acting as the control bits. Consequently, the exponentiation of the diagonalized fermionic operators can be represented as a product of multi-controlled Pauli-$X/Y$ gates.
\end{proof}

For molecular simulation, where only the real part of fermionic operators matters, the multi-controlled gates with a Clifford transformation can exactly implement fermionic gates defined in Lemma \ref{lemma: XP representation}. However, the number of multi-controlled gates given by the size of $S_{\hat{O}}$ indicates that the worst-case scenario gate complexity has combinatorial scaling $\mathcal{O}(M^N)$ \cite{bravyi2017tapering}. 


The segment code in Ref. \cite{Steudtner_2018} and polylog code in Ref. \cite{kirby2022second} circumvent \cheng{the problem of decoding} by imposing constraints on the encoder. Segment code employs block structure on the encoder to ensure that the \cheng{compressed} $N$-particle states can be \cheng{locally distinguished upon encoding}, partially preserving locality in fermionic gate operations. Although the corresponding explicit gate complexity is not specified, it is polynomial in both $N$ and $M$ based on the construction. Meanwhile, the Polylog code further exploits \cheng{encoding with} block structure and Bravyi-Kitaev transformation for a more efficient fermionic information representation, and QSP to approximately implement single qubit Pauli Z operators, with a gate complexity lower scaling as $\mathcal{O}(N^2 \log M^4)$.


These approaches demonstrate example of scalable fermionic gate implementation on the compressed state, but they are fundamentally limited to their encodings.  The QS-QEC duality provides a framework that generalizes these approaches. 






\subsection{Time Evolution}

\subsubsection{Trottorization}
Building on Section \ref{section: gate complexity}, we demonstrate two examples of time evolution algorithm encoding based on subroutines introduced in Figure \ref{fig:sign-decoder} and \ref{cir:P-transition}. Firstly, we show that individual fermionic gates can be implemented with single mode decoders. Unitary evolution of the single and double excitation are implemented in Figure \ref{cir:single-excitation} and \ref{cir:double-excitation}.

For each query of the single and double excitation, the single mode decoders are queried at most $\mathcal{O}(M)$ times since the calculation of signs requires summing mode information from between fermionic modes $i<m<j$ for single excitations and additionally $k<n<l$ for double excitations. If the single and double excitations are queried $G$ times in Trottorization, the query complexity of the single mode decoders $D_i$ is given by:
\begin{equation}
    \mathcal{O}(GM).
\end{equation}

We demonstrate two examples of time evolution algorithm encoding based on the subroutines introduced in Figs. \ref{fig:sign-decoder} and \ref{cir:P-transition}. First, we show that individual fermionic gates can be implemented using single-mode decoders. The unitary evolution of single and double excitations is illustrated in Figs. \ref{cir:single-excitation} and \ref{cir:double-excitation}.


While the discussion centers around the JW basis. In principle, decoding can be done in the BK basis, leading instead to the query complexity of:
\begin{equation}
    \mathcal{O}(G\log M),
\end{equation}
since only $\log M$ BK modes require decoding.

\subsubsection{Block Encoding}
Meanwhile, the decoder is not limited to encoding fermionic single and double excitations, it is possible to encode second-quantized subroutines, such as block encoding for time evolution \cite{Babbush_2017}. Given a second-quantized Hamiltonian of the form in Eqn. (\ref{eqn: hamiltonian}) with a Pauli decomposition:
\begin{equation}
    \hat{H} = \sum_{i}c_iP_i = \sum_{i}d_iZ^{\vec{c}_i}X^{\vec{a_i}},
\end{equation}
one can construct the $\mathcal{E}$\texttt{-SEL} (\texttt{SEL} stands for select) primitive in the compressed space using just the sign decoder in Figure \ref{fig:sign-decoder}:
\begin{align}
&\mathcal{E}\texttt{-SEL}\ket{\gamma}\ket{0}\ket{\mathcal{E}(\psi)}\nonumber\\
    &=\ket{\gamma}\hat{O}_{\text{sign }(\vec{c}_\gamma)}\ket{0}\mathcal{E}(X^{\vec{a}_{\gamma}})\ket{\mathcal{E}(\psi)}\nonumber\\
    &=\ket{\gamma}\ket{0}\mathcal{E}(Z^{\vec{c}_i})\mathcal{E}(X^{\vec{a}_{\gamma}})\ket{\mathcal{E}(\psi)}\nonumber\\
    &=\ket{\gamma}\ket{0}\mathcal{E}(P_{\gamma})\ket{\mathcal{E}(\psi)}
\end{align}
where the operator $\hat{O}_{\text{sign }(\vec{c}_i)}$ is given by the following sign decoder construction from Figure \ref{fig:sign-decoder}:
$$
\Qcircuit @C=0.5em @R=0.7em {
\ket{0}&&&\qw&\gate{\hat{\mathcal{D}}_{\text{sign}(\vec{c}_{\gamma})}}&\qw&\gate{Z}&\qw&\gate{\hat{\mathcal{D}}_{\text{sign}(\vec{c}_{\gamma})}}&\qw \\
\ket{\mathcal{E}(\psi)}&&&&\ctrl{-1}&\qw&\qw&\qw&\ctrl{-1}&\qw 
}
$$
When combined with the prepared primitive:
\begin{equation}
    \texttt{PREP}\ket{\gamma} = \sum_{i}\sqrt{\frac{{d_i}}{N}}\ket{i}
\end{equation}
we obtain the following block encoding:
\begin{align}
&\bra{0}\bra{0}\bra{\mathcal{E}(\psi)}\texttt{PREP}^\dagger \mathcal{E}\texttt{-SEL }\texttt{PREP}\ket{0}\ket{0}\ket{\mathcal{E}(\psi)}\nonumber \\ 
&= \bra{\mathcal{E}(\psi)}\mathcal{E}(\hat{H})\ket{\mathcal{E}(\psi)}/N
\end{align}
$\ket{\gamma}$ takes up $4\log M$ qubits, corresponding to the $\mathcal{O}(M^4)$ Pauli terms in $\hat{H}$. Thus, $G$ query of the $\mathcal{E}\texttt{-SEL}$ primitive, correponds to querying $\mathcal{O}(GM^4)$ $D_{\text{sign}(\vec{c})}$ decoder and hence:
\begin{equation}
    \mathcal{O}(GM^5)
\end{equation}
single mode fermionic decoders.

\begin{figure*}[!th]
$$
\Qcircuit @C=0.5em @R=1em {
    \ket{0}&&\qw&\qw&\gate{\hat{\mathcal{D}}_{\text{sign}(Z^{\vec{c}})}}&\qw&\qw&\qw&\qw&\ctrl{2}&\ctrlo{2}&\qw&\qw&\qw&\qw&\gate{\hat{\mathcal{D}}_{\text{sign}(Z^{\vec{c}}))}}&\qw&  \\
    \ket{0}&&\qw&\multigate{1}{GHZ}&\qw&\multigate{1}{\hat{\mathcal{D}}_{i, j}}&\ctrl{1}&\qw&\qw&\qw&\qw&\qw&\qw&\ctrl{1}&\multigate{1}{\hat{\mathcal{D}}_{i, j}}&\qw&\qw&  \\
    \ket{0}&&\qw&\ghost{GHZ}&\qw&\ghost{\hat{\mathcal{D}}_{i, j}}&\targ{}&\qw&\qw&\ctrl{1}&\ctrl{1}&\qw&\qw&\targ{}&\ghost{\hat{\mathcal{D}}_{i, j}}&\qw&\qw&  \\
    \ket{\mathcal{E}(\psi)}&&&&\ctrl{-3}&\ctrl{-1}&\qw&\qw&\qw&\gate{R_{\mathcal{E}(X^{\vec{a}})}(\theta)}&\gate{R_{\mathcal{E}(X^{\vec{a}})}(-\theta)}&\qw&\qw&\qw&\ctrl{-1}&\ctrl{-3}&\qw&
}
$$
\caption{Single excitations quantum gate constructions based on sign and projector decoder in Figure \ref{fig:sign-decoder} and Figure \ref{cir:P-transition}. The $D$ operators represent the decoding subroutine and in between is the encoding subroutine in Figure \ref{fig:QS}. Definition of $\vec{a}$ and $\vec{c}$ follows the convention in Eqn.~(\ref{eqn: Pauli-transformation}).}
\label{cir:single-excitation}
\end{figure*}

\begin{figure*}[!th]
$$
\Qcircuit @C=0.5em @R=1em {
    \ket{0}&&\qw&\qw&\gate{\hat{\mathcal{D}}_{\text{sign}(Z^{\vec{c}^*})}}&\qw&\qw&\qw&\qw&\ctrl{2}&\ctrlo{2}&\qw&\qw&\qw&\qw&\gate{\hat{\mathcal{D}}_{\text{sign}(Z^{\vec{c}^*})}}&\qw&  \\
    \ket{0}&&\qw&\multigate{3}{GHZ}&\qw&\multigate{3}{\hat{\mathcal{D}}_{i, j, k, l}}&\ctrl{1}&\qw&\qw&\qw&\qw&\qw&\qw&\ctrl{1}&\multigate{3}{\hat{\mathcal{D}}_{i, j, k, l}}&\qw&\qw&  \\
    \ket{0}&&\qw&\ghost{GHZ}&\qw&\ghost{\hat{\mathcal{D}}_{i, j, k, l}}&\targ{}&\ctrl{1}&\qw&\ctrl{1}&\ctrl{1}&\qw&\ctrl{1}&\targ{}&\ghost{\hat{\mathcal{D}}_{i, j, k, l}}&\qw&\qw&  \\
    \ket{0}&&\qw&\ghost{GHZ}&\qw&\ghost{\hat{\mathcal{D}}_{i, j, k, l}}&\qw&\targ{}&\ctrl{1}&\ctrlo{1}&\ctrlo{1} &\ctrl{1}&\targ{}&\qw&\ghost{\hat{\mathcal{D}}_{i, j, k, l}}&\qw&\qw& \\
    \ket{0}&&\qw&\ghost{GHZ}&\qw&\ghost{\hat{\mathcal{D}}_{i, j, k, l}}&\qw&\qw&\targ{}&\ctrl{1}&\ctrl{1}&\targ{}&\qw&\qw&\ghost{\hat{\mathcal{D}}_{i, j, k, l}}&\qw&\qw&  \\
    \ket{\mathcal{E}(\psi)}&&&&\ctrl{-5}&\ctrl{-1}&\qw&\qw&\qw&\gate{R_{\mathcal{E}(X^{\vec{a}^*})}(\theta)}&\gate{R_{\mathcal{E}(X^{\vec{a}^*})}(-\theta)}&\qw&\qw&\qw&\ctrl{-1}&\ctrl{-5}&\qw&
}
$$
\caption{Double excitations quantum gate constructions based on sign and projector decoder in Figure \ref{fig:sign-decoder} and Figure \ref{cir:P-transition}. The $D$ operators represent the decoding subroutine and in between is the encoding subroutine in Figure \ref{fig:QS}. $\vec{a}^*$ and $\vec{c}^*$ denotes binary vectors for double excitations}
\label{cir:double-excitation}
\end{figure*}




\subsection{Qubit-Gate complexity Trade-off}

Via the QS-QEC duality, we have demonstrated two examples of second-quantized subroutines in the linearly compressed basis using $\mathcal{O}(M)$ single-mode fermionic decoders. In exchange for the qubit reduction, the single-mode fermionic decoders result in an increase in gate complexity. The duality provides a natural framework for studying the qubit-to-gate complexity trade-off. In what follows, we will provide both lower and upper bounds on decoder complexity and give examples of efficient constructions.

Since the ancillary qubits are reusable, fermionic excitation evolution can be queried sequentially without increasing the number of ancillary qubits. In the absence of compression, the single-mode decoders are simply implemented using CNOT gates, as shown by the following relation: \begin{equation} \texttt{CNOT}_{i, a}\ket{m_i0_a} = \ket{m_im_a}, 
\label{eqn:CNOT-decoding}
\end{equation} where $m_i$ denotes the occupation number at mode $i$, and $a$ represents the ancillary qubit. Consequently, the $\hat{\mathcal{D}}_{i,j}/\hat{\mathcal{D}}_{i,j,k,l}$ decoders have $\mathcal{O}(1)$ complexity, while the $\hat{\mathcal{D}}_{\text{sign}}$ decoder has $\mathcal{O}(M)$ complexity.

In the worst-case scenario, the $\hat{\mathcal{D}}_{i,j}/\hat{\mathcal{D}}_{i,j,k,l}$ and $\hat{\mathcal{D}}_{\text{sign}}$ decoders can be trivially constructed using $\mathcal{O}(M^N)$ multi-controlled gates, similar to the construction in Eqn. (\ref{eqn: transformed operator}). Therefore, the single-mode fermionic decoder has complexity lower bound $C_{\hat{\mathcal{D}}_i}$ given by: \begin{equation} \mathcal{O}(1) < C_{\hat{\mathcal{D}}_i}. \end{equation} Using QS-QEC duality, the question of qubit-gate-measurement trade-off can be studied in terms of the decoder space-time complexity scaling as one compresses the qubit resources for the time resource.

\section{Fermionic Expectation Decoder}
\label{appendix: FED}

\subsection{Compatibility with Classical Decoder}
FED, as depicted in Algorithm \ref{alg2}, avoids direct computation of fermionic/spin operator in the encoded space through a decoding process. This approach circumvents the issue of measuring an unscalable number of Pauli operators and maintains polynomial scaling in measurement complexity. FED computes the expectation of the operator in Eqn.~(\ref{eqn: transformed operator}), from which a probability distribution $P(\vec{d})$ is sampled in the basis $\vec{d} = (1/0_B, \vec{m}(\vec{b}))$. To decode the bitstrings from $\mathbf{G}\vec{b}$ to $\vec{b}$, reversing the Clifford transformation is necessary to retrieve the projector $\mathbf{G}\vec{b}$ from $\vec{d}$—a task that is infeasible for classical post-processing.

Instead, when computing the fermionic sign $(-1)^{\vec{b}\cdot\vec{c}}$, one can focus solely on recovering either $P^{\mathbf{G}\vec{b}}$ or $P^{\mathbf{G}(\vec{a}\oplus\vec{b})}$ as shown in Eqn.~(\ref{eqn: re/im xp pair}) via partial recovery transformation. Collectively, this is equivalent to retrieving the projector information $X^{\mathbf{G}\vec{a}}(P^{\mathbf{G}\vec{b}}+P^{\mathbf{G}(\vec{a}\oplus\vec{b})})$. This can be achieved through the CNOT transformation $\hat{C}$ in Eqn.~(\ref{eqn: CNOT 0}) and (\ref{eqn: CNOT 1}), which can be efficiently implemented on a classical computer directly on the bitstrings of the histogram as $\mathbf{A}$.

Under our operator decomposition, we avoid two problems that arise in non-linear encoding: firstly, we avoid preparing an exponential number of measurement bases; and secondly, we avoid calculating expectation values in the encoded space, which would decompose into an exponential number of Pauli strings. The projector representation in Eqn.~(\ref{eqn: re/im xp pair}) avoids the Pauli representation of the encoded operator under conserved subspace compression. This new representation establishes a direct link to the decoding protocol of the CEC code and classifies the complexity of measurement grouping within the conserved subspace. Overall, this leads to a scalable encoding and decoding process for simulating systems with conserved particle numbers.
\subsection{The Decoding Algorithm}
\cheng{Results in Appendix \ref{appendix: measurement cost} can be directly employed for measurement basis preparations. Combined with the FED decoder, we have a quantum-classical observable decoder of compressed number-conserved fermionic data (Algorithm \ref{alg: decoder}):}

\begin{algorithm}
    \caption{DECODE}
    \label{alg: decoder}
    \begin{algorithmic}
    \Require
        Encoder $\mathcal{E}(\cdot)$, Decoder $F(\cdot)$, Encoded Quantum State $\ket{\mathcal{E}(\psi)}$, 
        Hamiltonian $\hat{H} = \sum_{i}c_{i}\hat{O}_i, \hat{O}_i^\dagger = \hat{O}_i$
    \Ensure
    $\bra{\psi}\hat{H}\ket{\psi}$
    \State
    $\text{sum} = 0$
    \For{$c_{i}, \hat{O}_i$ in $\hat{H}$}
    \State
    $\hat{C}\leftarrow\text{CLIFFORD}(\mathcal{E}(\hat{O}_i))$
    \State
    $\Vec{a}, \Vec{c}\leftarrow \text{GETVEC}(\hat{O}_i)$
    \State
    $P\leftarrow\text{MEAS}(\hat{C}^\dagger\ket{\mathcal{E}(\psi)})$
    \State
    $\text{val}\leftarrow\text{FED}(P, \Vec{a}, \Vec{c}, \mathcal{E}, F)$
    \State
    $\text{sum}\leftarrow \text{sum} + c_i*\text{val}$
    \EndFor
    \State
    \Return $\text{sum}$
    \end{algorithmic}
\end{algorithm}
\cheng{The Clifford gate preparation algorithm $\text{CLIFFORD}(\cdot)$ is based on the Clifford decomposition of Eqn.~(\ref{eqn: transformed operator}) and (\ref{eqn: Gate decomposition}) with a single qubit Clifford gate to diagonalize the $X_B/-iY_B$ Pauli operator. Meanwhile, $\text{GETVEC}(\cdot)$ is the extraction of the binary vector representation of $\hat{O}_i$ from the XP decomposition in the JW basis. $\text{MEAS}(\cdot)$ is measurement of the encoded quantum state in the quantum computer.}

\subsection{Applications}
The FED algorithm, leveraging the duality between CEC and fermionic observables, is applicable to general linearly compressed quantum states. Combining with the QS-QEC duality introduced in the Appendix \ref{appendix: gate complexity}, the FED algorithm finds applications across NISQ and FTQC regimes depending on the qubit-to-gate trade-offs in the compression, although the numerical results place no emphasis on the encoding of quantum gates.

For a moderate/Gilbert bounded qubit compression rate such as segment code and QNN decoders, although gate complexity increases, the FED algorithm finds application in decoding such compressed NISQ/early FTQC algorithms such as Trotterization. Meanwhile, as we will show in the following, the encoding/decoding protocol can be useful for encoding qubit ADAPT-VQE, providing a promising way of minimizing qubits for NISQ algorithms.

\section{qubit ADAPT-VQE}

\label{appendix: adapt-vqe}

In our numerical results, we have demonstrated the benefits of training VQE in a number-conserved subspace while ensuring that the encoding/decoding process remains scalable. However, the demonstration is limited to non-physically motivated ansätze, and it remains unexplored whether the RLE-FED protocol can facilitate the scalable implementation of VQE in the compressed subspace.

We present an algorithm that employs the RLE-FED scheme for encoding NISQ algorithms beyond non-physically motivated ansätze. While compression disrupts the structure of fermionic gates, it is possible to encode Hamiltonian information directly into qubit operators using qubit ADAPT-VQE \cite{Qubit-Adapt-VQE}. The algorithm selects the appropriate qubit gates by identifying the qubit operator that maximizes the gradient within an operator pool:
    \begin{equation}
        \frac{\partial}{\partial \theta_i}\langle\hat{H}\rangle = \bra{\mathcal{E}(\psi)}[\mathcal{E}(\hat{H}), \hat{\tau_i}]\ket{\mathcal{E}(\psi)},
        \label{eqn: gradient}
    \end{equation}
\cheng{
where $\hat{H}$ is the fermionic Hamiltonian, $\hat{\tau_i}$ is a qubit operator belonged to the operator pool, and $\ket{\mathcal{E}(\psi)}$ represents the encoded fermionic state in the qubit space. Specifically, to implement the operator $[\mathcal{E}(\hat{H}), \hat{\tau_i}]$, we make use of the parameter shift rule. $\mathcal{E}(\hat{H})$ is the Hamiltonian in the encoded space which requires decoding, and $\hat{\tau}_i$ is a qubit operator that acts directly on the qubit space. One can act $R_{\hat{\tau}_i}(\pm\pi/2)$ directly on $\ket{\mathcal{E}(\psi)}$ and compute the gradient via decoding the following term:}

\begin{equation}
    \frac{\partial}{\partial \theta_i}\langle\hat{H}\rangle = \nonumber
\end{equation}
\begin{equation}
    \frac{\langle R_{\hat{\tau}_i}(\pi/2)^\dagger\mathcal{E}(\hat{H})R_{\hat{\tau}_i}(\pi/2)\rangle -\langle R_{\hat{\tau}_i}(-\pi/2)^\dagger\mathcal{E}(\hat{H})R_{\hat{\tau}_i}(-\pi/2)\rangle}{2}.
\end{equation}
\cheng{In particular, the terms $\langle R_{\hat{\tau}_i}(\pi/2)^\dagger\mathcal{E}(\hat{H})R_{\hat{\tau}_i}(\pi/2)\rangle$ and $\langle R_{\hat{\tau}_i}(-\pi/2)^\dagger\mathcal{E}(\hat{H})R_{\hat{\tau}_i}(-\pi/2)\rangle$ can be classically decoded. This leads to the subroutine for qubit operator selection in qubit compressed ADAPT-VQE in Algorithm \ref{alg4}. Qubit ADAPT-VQE shows evidence that avoids barren-plateau and local minima, and it demonstrates CNOT gate reduction compared to the implementing fermionic ADAPT-VQE.}

\begin{algorithm}
    \caption{Qubit Operator Optimizer}
    \label{alg4}
    \begin{algorithmic}
    \Require
        Encoder $\mathcal{E}(\cdot)$, Decoder $F(\cdot)$, Qubit operator pool $\text{Pool}=\{\hat{\tau}_1, \hat{\tau}_2, ..., \hat{\tau}_n\}$, Hamiltonian $\hat{H}$, compressed quantum state $\ket{\mathcal{E}(\psi)}$
    \Ensure
        $\text{Max}_{\hat{\tau}_i\in P}\bra{\mathcal{E}(\psi)}[\mathcal{E}(\hat{H}), \hat{\tau_i}]\ket{\mathcal{E}(\psi)}$
    \State
    $L = [0]$
    \State
    $g_{\text{max}}=0$
    \For{$\hat{\tau}_m \in \text{Pool}$}
    \State
    $\ket{\psi^+} = e^{i\hat{\tau}_m\pi/2}\ket{\mathcal{E}(\psi)}$
    \State
    $\ket{\psi^-} = e^{-i\hat{\tau}_m\pi/2}\ket{\mathcal{E}(\psi)}$
    \State
    $g = \left|\frac{\text{DECODE}(\mathcal{E}, F, \ket{\psi^+}, \hat{H})-\text{DECODE}(\mathcal{E}, F, \ket{\psi^-}, \hat{H})}{2}\right|$
    \If{$g>g_{\text{max}}$}
    \State
    $L[0]\leftarrow \hat{\tau}_m$
    \State
    $g_{\text{max}}\leftarrow g$
    \EndIf
    \State
    $\text{APPEND}(L, (g_m, \hat{\tau}_m))$
    \EndFor
    \State
    \Return $L[0], g_{\text{max}}$
    \end{algorithmic}
\end{algorithm}


When combining RLE-FED with ADAPT-VQE, Eqn.~(\ref{eqn: gradient}) can be efficiently sampled. Building on the results from Ref. \cite{Qubit-Adapt-VQE}, the qubit operator pool ${\hat{\tau}_1,\hat{\tau}_2, ..., \hat{\tau}_n }$ scales linearly with the number of qubits $Q$. Under our scheme's qubit scaling, the size of this operator pool is effectively reduced from $M$ to $2N\log M$, potentially enabling VQE simulations involving a large number of fermionic modes. This reduction in operator pool size yields practical benefits, as demonstrated in our LiH simulation, where chemical accuracy is achieved with significantly fewer CNOT gates compared to the JW-encoded basis. Furthermore, fermionic observable decoding plays a key role in reducing quantum gate overhead, enhancing the feasibility of near-term applications.

\section{Numerical Details}

\label{appendix: numerical}
\subsection{The Compression Rate}
Numerical study of the compression rate $Q/M$ given $Q$. This table directly compares to the Graph-Based encoding table in \cite{bravyi2017tapering}, which shows qubit advantages in all settings. Meanwhile, the polylogarithmic encoding cannot be compared with because their encoding advantage only appears in the $10^5$ qubits regime. Logarithmic scaling in $M$ means that it is possible to compress a large number of fermionic modes with a qubit cost that is manageable in the FTQC regime. For instance, a 2-electron problem with 100 modes can be encoded with fewer than 20 qubits and a 4-electron problem with 400 modes in 50 qubits.

\begin{table}[!htp]
\resizebox{0.95\columnwidth}{!}{
\begin{tabular}{|c|c|c|c|c|c|c|c|c|c|c|c|}
\hline
\diagbox[dir=SE]{N}{M}{Q}  & 10 & 12 & 14 & 16 & 18 & 20  & 22  & 24  & 26  & 28  & 30  \\ \hline
2 & 22 & 36 & 48 & 64 & 90 & 118 & 158 & 226 & 316 & 420 & 580 \\ \hline
3 & 13 & 20 & 25 & 31 & 38 & 46  & 58  & 72  & 88  & 105 & 140 \\ \hline
4 & 11 & 14 & 18 & 23 & 27 & 31  & 36  & 42  & 50  & 60  & 71  \\ \hline
5 & 11 & 14 & 16 & 19 & 22 & 26  & 31  & 34  & 39  & 44  & 49  \\ \hline
6 & 11 & 13 & 16 & 18 & 21 & 23  & 27  & 31  & 34  & 37  & 41  \\ \hline
\end{tabular}
}
\caption{The numerical result of RLE compression. This table shows the maximum number of modes $M$ that we can encode given the number of electron $N$ (rows) and qubits $Q$ (columns) with RLE. All results were computed on an 8-core laptop.}\label{t1}
\end{table}

To analyze the scaling with respect to the number of modes, we evaluated the compression rate $Q/M$ and presented the results in Figure \ref{Fig1} for a 4-electron problem. We benchmarked our results against the JW basis, graph-based encoding, and the theoretical upper and lower bounds. Our compression rate for linear encoding surpasses that of all previous works [10] across all scenarios. Compared to the non-linear QEE scheme, we maintain $\mathcal{O}(N \log M)$ scaling while achieving polynomial measurement scaling.
\begin{figure}[!h] 
    \centering
    \includegraphics[width=0.46\textwidth]{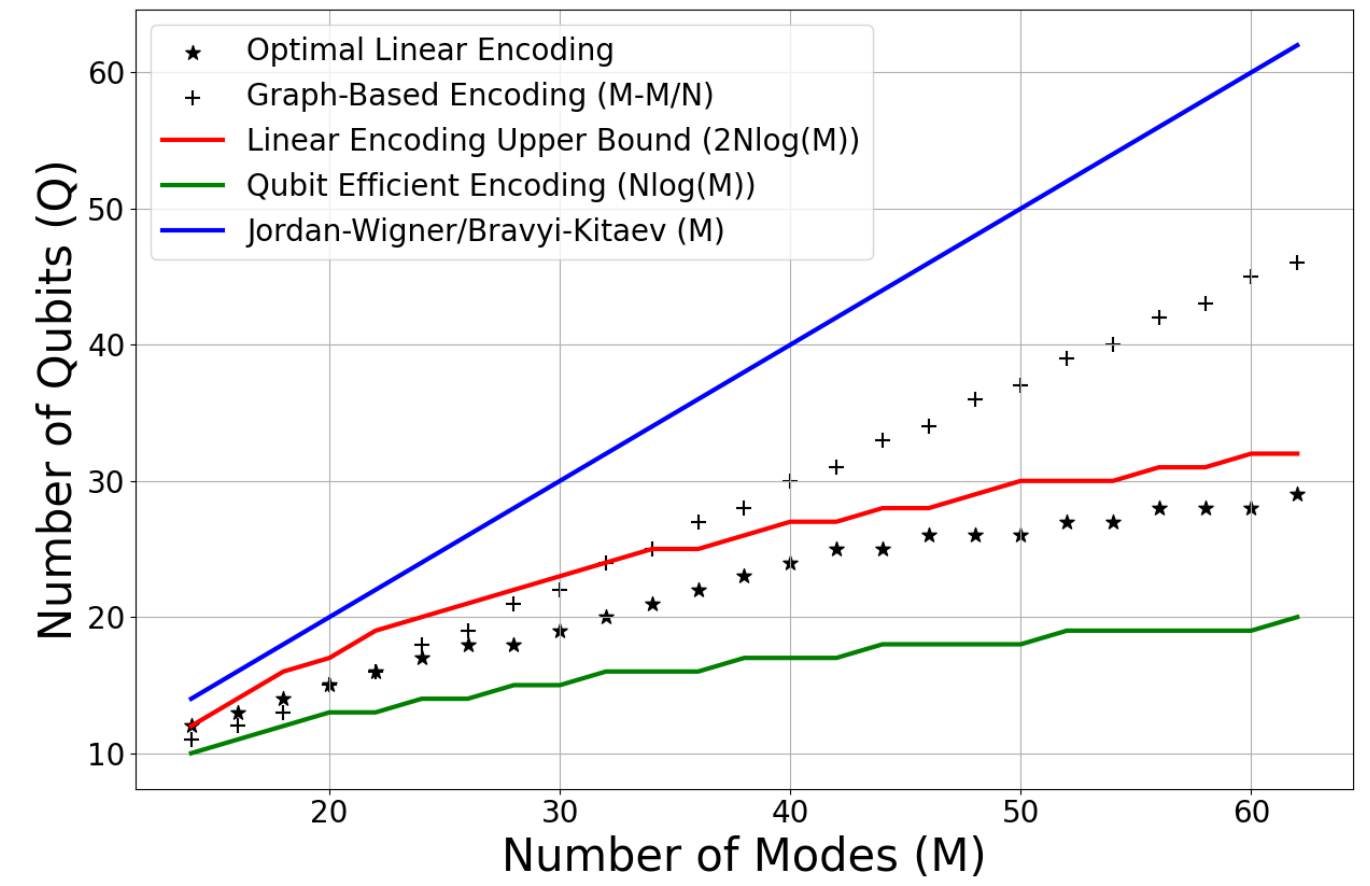}
    \caption{Comparison of the qubit cost of RLE (star) to the JW basis (blue), graph-based encoding (cross), and QEE (green) for a 4 electron problem and the $2N\log M$ bound (red) against the number of orbitals.}
    \label{Fig1}
\end{figure}

\subsection{Hardware-Efficient Ansatz}
The Hardware-Efficient Ansatz (HEA) is commonly used in Variational Quantum Eigensolver (VQE) due to its resource efficiency given the current noisy constraints of quantum devices. Different hardware setups have their own specific HEA configurations, which include the selection of gates, circuit connectivity, and noise models. To meet these criteria, we design our HEA with two constraints. First, the gates used in our VQE simulations are restricted to the two-qubit Controlled-NOT gate and the single-qubit Y gate for rotations. Second, the connectivity of the two-qubit gates is constrained to a one dimensional chain topology. The design of the ansatz is therefore as follows:

$$U(\vec{\theta})=\Qcircuit @C=1.5em @R=0.7em {
&\gate{Ry(\theta_1)}& \ctrl{1} & \qw         & \qw         & \qw \gategroup{1}{1}{6}{6}{2em}{--} \\
&\gate{Ry(\theta_2)}& \targ    & \ctrl{1}    & \qw         & \qw \\
&\gate{Ry(\theta_3)}& \qw      & \targ       & \ctrl{1}    & \qw & \times n, \\
&\gate{Ry(\theta_4)}& \qw      & \qw         & \targ       & \ctrl{1}  \\
&\gate{Ry(\theta_5)}& \qw      & \qw         & \qw         & \targ   \\
& &\vdots&  &  &
}$$
the $n$ is the number of repetitions of the circuit block.

\subsection{Comparison with Unitary Couple Cluster type of Ansatz}

We compared the performance of the Hardware-Efficient Ansatz (HEA) in the compressed space to the UCC-type Ansatz \cite{sokolov2020quantum}, which is the gold standard for computational chemistry simulations, specifically for the ground state simulation of LiH in the STO-3G and 6-31G basis sets. For the active space chosen as $M=8$ and $M=16$ fermionic modes respectively, we restricted the Hilbert space generated from both bases to the subspace of two electrons, denoted as (8,2) and (16,2).

In the STO-3G case, both the 6-qubit compressed and 8-qubit uncompressed Hamiltonians achieved chemical accuracy of 1 Kcal/mol using HEA with a reasonable circuit depth. However, the 6-qubit Hamiltonian achieved accuracy four orders of magnitude higher compared to the 8-qubit Hamiltonian, with fewer parameters and shallower circuit depth.

In contrast, the UCC-type ansatz converges faster and more easily compared to HEA, owing to its smaller number of parameters. However, achieving approximately one order of magnitude higher accuracy than HEA requires a circuit depth approximately two orders of magnitude deeper and roughly ten times more two-qubit gates.

\begin{table}[!htb]
\begin{tabular}{|c|cccc|}
\hline
Systems & \multicolumn{4}{c|}{LiH (8,2)} \\ \hline
\multirow{2}{*}{Ansatz Types} & \multicolumn{2}{c|}{HEA} & \multicolumn{2}{c|}{UCC} \\ \cline{2-5} 
 & \multicolumn{1}{c|}{8 qubits} & \multicolumn{1}{c|}{6 qubits} & \multicolumn{1}{c|}{SUCCD} & UCCSD \\ \hline
Number of Parameters & \multicolumn{1}{c|}{48} & \multicolumn{1}{c|}{36} & \multicolumn{1}{c|}{6} & 15 \\ \hline
Number of CNOT & \multicolumn{1}{c|}{35} & \multicolumn{1}{c|}{25} & \multicolumn{1}{c|}{480} & 768 \\ \hline
Circuit Depth & \multicolumn{1}{c|}{41} & \multicolumn{1}{c|}{31} & \multicolumn{1}{c|}{668} & 1043 \\ \hline
$\Delta E$ (kcal/mol) & \multicolumn{1}{c|}{0.4126} & \multicolumn{1}{c|}{0.0002} & \multicolumn{1}{c|}{0.5216} & 0.00003 \\ \hline
\end{tabular}
\caption{This table showcases the noiseless simulation of (8,2) LiH using HEA and UCC-type ansatz on the state vector simulator. $\Delta E$ is the difference between the VQE energy and energy from exact diagonalisation. The HEA results are chosen from the lowest energy results using the VQE with $\texttt{L-BFGS-B}$ optimizer and 100 different initial states. The UCCSD and SUCCD ansatz is constructed from Qiskit package}\label{t2}
\end{table}
However, when scaling up to the (16,2) LiH system, the Hardware-Efficient Ansatz (HEA) applied directly to the uncompressed Hilbert space starts exhibiting Barren Plateaus. Even with L-BFGS-B optimization using 100 initial states, convergence to the required chemical accuracy is not achieved.

In contrast, using the RLE approach, which encodes the original 16-qubit Hamiltonian into an 8-qubit Hamiltonian, HEA can successfully converge to chemical accuracy.\begin{table}[!htb]
\begin{tabular}{|c|cccc|}
\hline
Systems & \multicolumn{4}{c|}{LiH (16,2)} \\ \hline
\multirow{2}{*}{Ansatz Types} & \multicolumn{2}{c|}{HEA} & \multicolumn{2}{c|}{UCC} \\ \cline{2-5} 
 & \multicolumn{1}{c|}{16 qubits} & \multicolumn{1}{c|}{8 qubits} & \multicolumn{1}{c|}{SUCCD} & UCCSD \\ \hline
Number of Parameters & \multicolumn{1}{c|}{96} & \multicolumn{1}{c|}{48} & \multicolumn{1}{c|}{28} & 63 \\ \hline
Number of CNOT & \multicolumn{1}{c|}{75} & \multicolumn{1}{c|}{35} & \multicolumn{1}{c|}{4032} & 7280 \\ \hline
Circuit Depth & \multicolumn{1}{c|}{81} & \multicolumn{1}{c|}{41} & \multicolumn{1}{c|}{4874} & 8675 \\ \hline
$\Delta E$ (kcal/mol) & \multicolumn{1}{c|}{2.2099} & \multicolumn{1}{c|}{0.7186} & \multicolumn{1}{c|}{0.5749} & 0.00001 \\ \hline
\end{tabular}
\caption{This table showcases the noiseless simulation of (16,2) LiH using HEA and UCC-type ansatz on the state vector simulator. $\Delta E$ is the difference between the VQE energy and energy from exact diagonalisation. The HEA results are chosen from the lowest energy results using the VQE with $\texttt{L-BFGS-B}$ optimizer and 100 different initial states. The UCCSD and SUCCD ansatz are constructed from the Qiskit package.}\label{t3}
\end{table}

In conclusion, the LiH system scaling result shows that the qubit compression can improve VQE convergence while reducing the qubit cost and circuit depth. 

\section{Randomized Linear Encoder}
\label{appendix: RLE}

\subsection{The algorithm}
RLE (Algorithm \ref{alg: RLE}) begins its random search by choosing a $Q$ within the bounded region $N\log_2 M < Q < 2N\log_2 M$. Then, the algorithm initializes the parity check matrix $\mathbf{G} = [I_Q | D]$ in the \textit{standard basis} \cite{err_ch1}, where $I_Q$ is a $Q \times Q$ identity matrix. The standard basis helps reduce the space of randomized search. We randomly generate the $Q \times (M-Q)$ matrix $D$ such that each column has even Hamming weight equal to the value $Q/2$, denoted as $\text{even}(Q/2)$. 

\begin{algorithm}[]
    \caption{Randomized Linear Encoder}
    \label{alg}
    \begin{algorithmic}[]
    \Require
      Target States $\mathbf{S}$; Testing subset $T_s$;
    \Ensure
      Codeword $\mathbf{C}$;
    \While{True}
    \State 
    Initialized Codeword $\mathbf{W}=[];$
    \State 
    Random generate D with constraints;
    \State
    $\mathbf{G}\leftarrow[I_Q|D]$;
    \State
    Create distance $2N+2$ error correction matrix $C\leftarrow[-D^T|I_{M-Q}]$;
    \State
    Create a checking list $w=[]$
        \For{$t_s$ in $T_s$}
            \State $w\leftarrow [w|D_H(\sum C^T(t_s) \: mod\:2,\mathbf{0})]$;
        \EndFor
        \If{$\min(w) \geq 2N+2$}
        \For{$\mathbf{s}$ in $\mathbf{S}$}
            \State $\mathbf{c}=\mathbf{C}\mathbf{s}$;
            \State $\mathbf{W}=[\mathbf{W}|\mathbf{c}]$;
        \EndFor
        \If{no repeat element in \textbf{W}}
        \State
        \Return $\mathbf{W}$
        \EndIf
        \EndIf
    \EndWhile
    \end{algorithmic}
    \label{alg: RLE}
\end{algorithm}

The second part of the algorithm consists of checks. Initially, the algorithm censors a large set of flawed generators by randomly verifying the Hamming weight of the element $\Vec{k}$ generated by the $M-Q$ dimensional vector $\Vec{v}$ and the CEC $\mathbf{C}$.
\begin{equation}
    \Vec{k} = \mathbf{C}\Vec{v}, \Vec{v}\in S
    \label{eqn: checking}
\end{equation}
where $S$ is defined as the set of all vectors with Hamming weight $2$ and $2K$ for $2K<N$, where $K$ is a randomly generated integer. The random check serves to maximize the probability of obtaining the correct encoder. Finally, the algorithm explicitly computes the states to check for repetitions. This process is repeated until the minimum $Q$ is found.

\begin{figure}[] 
    \centering
    \includegraphics[width=0.48\textwidth]{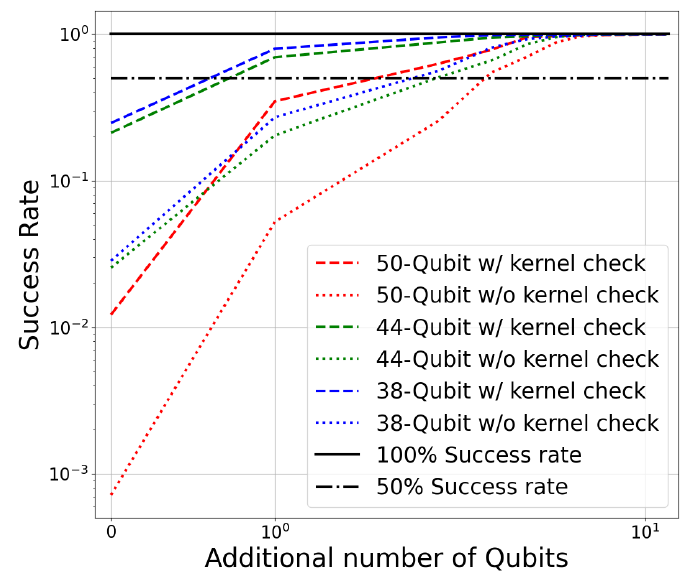}
    \caption{Impact on the success probability of generating $\mathbf{G}$ when adding extra qubits to the generator. The x-axis shows the number of auxiliary qubits and the y-axis is the success rate. The dashed line, which resulted from a codeword check generated in Eqn.~(\ref{eqn: checking}), shows a systematic improvement in success rate compared to the unchecked (dotted line) case.}
    \label{fig: FIG2}
\end{figure}

\subsection{Scaling of RLE}
We examine the scalability of RLE. The algorithm has a computation cost proportional to $\mathcal{O}(M^N)$ for each physical state check, multiplied by the number of checks performed. This complexity scaling is comparable to the Brouwer-Zimmermann algorithm, which is recognized as the most efficient method for checking the minimal distance of $\mathbf{C}$ \cite{Hernando_fast_BZ}. RLE also shares the same computational cost as non-linear encoding \cite{shee2022qubit}, but it is much simpler to implement. We can further enhance RLE scalability by incorporating auxiliary qubits into the optimal encoding. Figure \ref{fig: FIG2} summarizes the success probability of generating the correct parity check matrix $\mathbf{G}$ for 4-electron systems with auxiliary qubits. The algorithm's runtime exponentially converges to $\mathcal{O}(M^N)$ — the complexity for one check — when auxiliary qubits are successively added.




\end{document}